\documentclass[
  10pt,
  nofootinbib,
  superscriptaddress,
  onecolumn,
  amsfonts,
  amssymb,
  amsmath,
  aps,
  pra,
  floatfix
]{revtex4-2}
\usepackage{graphicx}
\usepackage{mathtools}
\usepackage{mathrsfs}
\usepackage{amsthm}
\usepackage{bm}
\usepackage{booktabs}
\usepackage{array}
\usepackage{longtable}
\usepackage[inline]{enumitem}
\usepackage{csquotes}
\usepackage{comment}
\PassOptionsToPackage{colorlinks=true,citecolor=cyan,urlcolor=cyan,linkcolor=cyan,breaklinks=true}{hyperref}
\usepackage{orcidlink}
\usepackage{hyperref}
\usepackage{url}

\makeatletter
\def\switch@array{}
\makeatother

\newtheoremstyle{break}
  {}{}{\itshape}{}{\bfseries}{.}{\newline}
  {\thmname{#1}\thmnumber{ #2}\thmnote{ (\bfseries #3)}}
\theoremstyle{break}
\newtheorem{theorem}{Theorem}

\newtheorem{definition}{Definition}
\newtheorem{corollary}{Corollary}
\newtheorem{proposition}{Proposition}
\newtheorem{remark}{Remark}

\newcommand{\eq}[1]{\begin{align} #1 \end{align}}

\newcommand{\bbZ}{\mathbb{Z}}
\newcommand{\commutes}{\mathbin{\rotatebox[origin=c]{270}{$\multimap$}}}
\DeclareMathOperator{\com}{com}
\newcommand{\MO}{\mathrm{MO}}
\newcommand{\bvec}[1]{\mathbf{#1}}
\newcommand{\impS}{\Rightarrow_{\mathrm{S}}}
\newcommand{\impC}{\Rightarrow_{\mathrm{C}}}
\newcommand{\impR}{\Rightarrow_{\mathrm{R}}}
\newcommand{\dcS}{\otimes_{\mathrm{S}}}
\newcommand{\dcC}{\otimes_{\mathrm{C}}}
\newcommand{\dcR}{\otimes_{\mathrm{R}}}

\begin{document}

\title{Quantum Logic as the Logic of Contexts}

\author{Haruki Emori\,\orcidlink{0009-0007-2264-9192}}
\email{emori.haruki.i8@elms.hokudai.ac.jp}
\affiliation{Graduate School of Information Science and Technology, Hokkaido University, Kita 14, Nishi 9, Kita-ku, Sapporo, Hokkaido 060-0814, Japan}
\affiliation{RIKEN Center for Interdisciplinary Theoretical and Mathematical Sciences (iTHEMS), 2-1 Hirosawa, Wako, Saitama, 351-0198, Japan}
\affiliation{Teikyo University Advanced Comprehensive Research Organization (ACRO), 2-21-1 Kaga, Itabashi-ku, Tokyo, 173-0003, Japan}

\author{Atsushi Iriki\,\orcidlink{0000-0002-5262-163X}}
\email{iriki.atsushi.lh@teikyo-u.ac.jp}
\affiliation{Teikyo University Advanced Comprehensive Research Organization (ACRO), 2-21-1 Kaga, Itabashi-ku, Tokyo, 173-0003, Japan}
\affiliation{RIKEN Center for Interdisciplinary Theoretical and Mathematical Sciences (iTHEMS), 2-1 Hirosawa, Wako, Saitama, 351-0198, Japan}
\affiliation{Brain Mind and Consciousness Program, Canadian Institute for Advanced Research (CIFAR), 661 University Avenue, Suite 505, Toronto, ON M5G 1M1 Canada}

\author{Andrei Khrennikov\,\orcidlink{0000-0002-9857-0938}}
\email{andrei.khrennikov@lnu.se}
\affiliation{Center for Mathematical Modeling in Physics and Cognitive Sciences, Linnaeus University, V{\"a}xj{\"o}, SE-351 95, Sweden}

\author{Kazunori Kondo\,\orcidlink{0009-0002-5798-2837}}
\email{kondo.hus@osaka-u.ac.jp}
\affiliation{Graduate School of Human Sciences, Department of Human Sciences, The University of Osaka, 1-2 Yamadaoka, Suita, Osaka, 565-0871, Japan}

\date{\today}

\begin{abstract}
Quantum logic is usually presented as a non-classical departure from ordinary reasoning forced on us by quantum mechanics, with classical logic kept as the secure starting point.
We argue for the opposite order of explanation in a finite and fully computable setting.
The free orthomodular lattice on two generators has ninety-six elements, the direct product of a six-element non-distributive factor and a sixteen-element Boolean factor.
Reading the first factor as a register of contexts and the second as Boolean content, we obtain a calculus whose elements are context--bit-vector pairs and whose operations act component by component.
With this calculus we establish three results.
First, we classify the six layers by commutativity, identifying the central kernel of context-neutral propositions together with a dual central layer in which all complementary contexts are present.
Second, we show that orthocomplementation rearranges the layers exactly as the complementation of the small factor rearranges its elements, which makes the duality among the layers rigid rather than accidental.
Third, we prove that the operation forgetting the context is a surjective homomorphism of orthocomplemented lattices whose quotient is the classical Boolean algebra, so that classical logic is a six-to-one, information-losing image of the contextual calculus.
\end{abstract}

\maketitle

\section{Introduction}
\label{sec:intro}
Quantum logic was introduced by Birkhoff and von Neumann~\cite{birkhoff1936} as the calculus of experimental propositions about a quantum system, with the lattice of projections of a Hilbert space taking the place of a Boolean algebra and with the distributive law no longer valid.
Since then it has most often been read as a non-classical departure from ordinary reasoning, one made necessary by the peculiarities of quantum mechanics, while classical Boolean logic is kept as the secure foundation against which such departures are measured.
The present paper argues, in a finite and completely computable setting, that this order of explanation can be reversed.
Classical logic is better understood as a quotient of a contextual quantum logic, obtained by forgetting which context a proposition belongs to.

Many cognitive and psychological experiments share a feature that ordinary logic does not record, namely that an answer depends on the question asked, on the order of questions, and on the context in which a question is posed.
Classical Boolean logic can store the final answer, but it does not store the context-dependent process by which that answer became determinate.
The order of questions, the task demand, and the measurement basis selected by an experiment jointly shape which proposition becomes determinate, which is the reason classical Boolean logic is insufficient as a generative logic even though it remains adequate as a record of final answers.
The present paper isolates, in a finite and fully computable model, the logical operation that turns such a contextual process into a classical Boolean record, and it does so as one part of a three-paper program on measurement strength~\cite{emori2026cc,emori2026pi}.
Ref.~\cite{emori2026cc} places major theories of consciousness and their experimental protocols along a weak-to-strong measurement-strength axis, Ref.~\cite{emori2026pi} supplies the dynamics of candidate histories, imaginary-time rectification, and the system--probe coupling that implements that axis, and the present paper supplies the logical machinery, a finite context calculus and the context-forgetting projection by which a contextual process becomes a classical Boolean record.

Several independent lines of work indicate that contextuality is intrinsic to any system with incompatible observables, rather than being an anomaly to be explained away.
Bohr's principle of complementarity~\cite{bohr1928,bohr1937, bohr1958} already held that quantum observables have no fixed meaning independent of an experimental context, so that incompatible observables demand mutually exclusive measurement arrangements.
The Kochen--Specker theorem~\cite{kochen1967} made this precise by excluding any context-independent valuation that respects the functional relations among observables, and the algebraic structure of contextual hidden-variable theories was later characterized through quantum set theory by Ozawa~\cite{ozawa2022hv}.
On the side of mathematical logic, Takeuti~\cite{takeuti1981a} founded quantum set theory on the orthomodular lattice of projections, and Ozawa developed it into a full semantics~\cite{ozawa2016,ozawa2017,ozawa2021a,ozawa2021b,ozawa2026} in which a transfer principle bounds the quantum truth value of every theorem of Zermelo--Fraenkel set theory with choice from below by a commutator, an element of the lattice that measures the degree of commutativity of the constants that appear~\cite{ozawa2021a}.
The long-standing arbitrariness in the choice of a quantum conditional was resolved by Hardegree~\cite{hardegree1981}, who isolated three polynomially definable conditionals, and was organized by Ozawa~\cite{ozawa2021a,ozawa2026} into a six-member family that is again governed by the commutator.
In a complementary direction, sheaf- and topos-theoretic accounts~\cite{abramsky2011,isham1998,doring2008} treat contextuality as the failure of consistent local data on the commutative subalgebras to glue into a single global assignment, and the orthomodular structure carried by a pair of propositions has recently been generated categorically, as a left adjoint, by Gunji \textit{et al.}~\cite{gunji2026a}.

These developments share an algebraic core.
By the commutator decomposition theorem~\cite{chevalier1989,ozawa2016,ozawa2026}, the sublogic generated by a subset of an orthomodular lattice is the direct product of a Boolean factor, on which the generators commute, and a residual factor that carries no Boolean part.
For two free generators this core is the free orthomodular lattice on two generators, which has exactly ninety-six elements in the enumeration of Beran~\cite{beran1985} and was confirmed to have no further two-variable operations by later computation~\cite{megill2001,hycko2005}.
In this case the decomposition is the direct product of a six-element non-Boolean factor, the modular ortholattice $\MO_2$ known as the Chinese lantern, and the sixteen-element Boolean algebra $2^4$.
What has not been isolated and studied in its own right, although it is implicit in the work just cited, is a triple of objects.
The first is an explicit calculus that writes each of the ninety-six elements as a pair consisting of a context drawn from the small factor and a Boolean bit vector, and computes every lattice operation component by component.
The second is a classification of the resulting six layers by their commutativity.
The third is the operation that forgets the context factor, together with the precise sense in which it turns the slogan ``classical logic is a quotient of quantum logic'' into a theorem.

This paper supplies these three objects and draws out their consequences.
We introduce the context--bit-vector calculus and prove, by reducing commutativity in the product to commutativity in the small factor, that the central elements of the lattice are exactly those of two of the six layers, namely the Boolean kernel of context-neutral propositions and a second, dual central layer in which the four complementary contexts are simultaneously present (Proposition~\ref{prop:commute}).
We then show that orthocomplementation permutes the six layers in exactly the way the orthocomplementation of the small factor permutes its six elements, so that the vertical and horizontal dualities among the layers are forced by that factor rather than being contingent (Proposition~\ref{prop:rigidity}).
Our main result, in Section~\ref{sec:projection}, is that the context-forgetting projection onto the Boolean factor is a surjective homomorphism of orthocomplemented lattices whose kernel congruence identifies precisely those elements that share a bit vector, so that the classical sixteen-element Boolean algebra is the quotient of the ninety-six element lattice by this congruence (Proposition~\ref{prop:boolean-quotient}).
Because every fiber of the projection has six elements, classical logic emerges as a uniform six-to-one, information-losing image of the contextual calculus.
We then explain how the projection acquires physical content once one adjoins the Born statistical formula and represents propositions by projections, which is the point at which it makes contact with Bohr's reading of complementarity and with the measurement-strength account of cognition~\cite{emori2026cc,emori2026pi}.
The reading that emerges reframes three habitual views.
It presents (i) classical logic as a quotient rather than quantum logic as an exception, (ii) the classical regime as a limit of information loss rather than as a foundation, and (iii) the failure of distributivity not as a defect but as a diagnostic for the presence of inter-context structure.
This perspective is consistent with the characterization of contextual hidden-variable theories given in quantum set theory~\cite{ozawa2022hv}.

We add two clarifications that hold throughout the paper.
First, the parallels we draw with the physical vocabulary of superposition, collapse, and erasure are structural analogies internal to the logic and are not claims about the dynamics of a Hilbert-space description, since physical content enters only where we explicitly adjoin the Born statistical formula.
Second, where we speak of an intermediate ``intuitionistic'' regime in Takeuti's sense we do so heuristically, because the natural candidate cannot be realized as a lattice quotient, as we show in Section~\ref{sec:discussion}.
The aim of this model is not to attach quantum vocabulary to cognition.
It is to track information that classical records discard, namely which context a proposition belongs to, whether two propositions can be combined without changing one another, and what is lost when a context-dependent process is forced into a context-free answer.
The enumeration of all ninety-six elements in Appendix~\ref{app:enumeration}, together with their formulas and vectors, has been verified by direct computation in the model.

The remainder of the paper is organized as follows.
Section~\ref{sec:prelim} recalls orthomodular lattices, the commutator and the decomposition theorem, and the quantum conditionals, and fixes the interface to quantum set theory.
Section~\ref{sec:stratification} develops the computational core of the paper, namely the context--bit-vector calculus, the classification of the six layers by commutativity, and the dynamics of context under meet and join.
Section~\ref{sec:dual-symmetry} establishes the layer dualities and proves their rigidity.
Section~\ref{sec:projection} contains the main result, the context-forgetting projection and its physical reading.
Section~\ref{sec:discussion} discusses the relation to Takeuti's three logics, to measurement strength, and to sheaf- and topos-theoretic approaches, and connects the forgetting projection to the strong-measurement limit of Refs.~\cite{emori2026cc,emori2026pi}.
Section~\ref{sec:conclusion} concludes.
Two appendices give the full enumeration and a set of worked computations.

\section{Mathematical Preliminaries}
\label{sec:prelim}
\subsection{Orthomodular lattices, commutators, and the decomposition theorem}
\label{sec:coml}
A \emph{complete orthomodular lattice} is a structure $\mathcal{L} = (L, \leq, {}^\perp, 0, 1)$ in which $(L, \leq)$ is a complete lattice, so that every subset $S \subseteq L$ has a supremum $\bigvee S$ and an infimum $\bigwedge S$ with $0 = \bigwedge L$ and $1 = \bigvee L$, and ${}^\perp$ is an orthocomplementation, a unary operation that satisfies $P \leq Q \Rightarrow Q^\perp \leq P^\perp$, $(P^\perp)^\perp = P$, $P \vee P^\perp = 1$, and $P \wedge P^\perp = 0$.
The orthomodular law requires that $P \leq Q$ imply $Q = P \vee (P^\perp \wedge Q)$~\cite{husimi1937}.
Following Ozawa~\cite{ozawa2021a,ozawa2026} we call a complete orthomodular lattice a \emph{logic} and identify orthomodular structure with the algebraic basis of logical operations.

Two elements $P, Q$ of a logic \emph{commute}, written $P \commutes Q$, when $P = (P \wedge Q) \vee (P \wedge Q^\perp)$.
This expresses the orthogonal decomposability of $P$ relative to $Q$, and a logic is a Boolean algebra exactly when all of its elements pairwise commute.
For a subset $A \subseteq L$ the commutant $A^{!} = \{P \in L \mid P \commutes Q \text{ for all } Q \in A\}$ is a complete subalgebra, and the central commutator
\eq{
  \com(A) = \max\{E \in A^{!} \mid P \wedge E \commutes Q \wedge E \text{ for all } P, Q \in A\}
  \label{eq:com-def}
}
is the largest element below which all of $A$ becomes mutually commutative~\cite{chevalier1989,ozawa2016,ozawa2026}.
For a two-element set this central commutator has the closed form
\eq{
  \com(P,Q) = (P \wedge Q) \vee (P \wedge Q^\perp) \vee (P^\perp \wedge Q) \vee (P^\perp \wedge Q^\perp),
  \label{eq:com-two}
}
a theorem of Ozawa~\cite[Theorem~2.1(i)]{ozawa2026} rather than a separate definition.
The same join of four meets is computed in Appendix~\ref{app:examples}, where it is read directly as the central commutator of Eq.~\eqref{eq:com-def} and not as a distinct, Marsden-style construction.
\begin{theorem}[Decomposition theorem~\cite{chevalier1989,ozawa2016,ozawa2026}]
\label{thm:decomp}
Let $A$ be a subset of a logic $\mathcal{L}$. Then the sublogic $A^{!!}$ generated by $A$ is isomorphic to the direct product
\eq{
  A^{!!} \cong [0, \com(A)]_{A^{!!}} \times [0, \com(A)^\perp]_{A^{!!}}, \nonumber
}
in which the first factor is a complete Boolean algebra and the second is a complete orthomodular lattice with no non-trivial Boolean factor.
\end{theorem}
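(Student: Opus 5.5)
To prove Theorem~\ref{thm:decomp}, I would first establish that $E := \com(A)$ is central in the sublogic $A^{!!}$, and then use the standard fact that a central element splits an orthomodular lattice as a direct product of its two intervals. By Eq.~\eqref{eq:com-def}, $E \in A^{!}$. Since $A^{!}$ is a complete subalgebra and $A^{!!}$ is the commutant of $A^{!}$, every element of $A^{!}$ commutes with every element of $A^{!!}$. I would also check that $E \in A^{!!}$. For this I would use the maximality of $E$: every $F \in A^{!}$ commutes with each $P \in A$, and so relativizing by $F$ preserves the defining condition, which should force $E$ to commute with $A^{!}$. Hence $E$ lies in $A^{!} \cap A^{!!}$, which is the center of $A^{!!}$.

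Second, I would prove the splitting lemma. For a central $E$ in an orthomodular lattice $M$, the map $X \mapsto (X \wedge E,\, X \wedge E^\perp)$ is an isomorphism $M \cong [0,E] \times [0,E^\perp]$, where each interval carries the relative complement $X \mapsto X^\perp \wedge E$ (respectively $X^\perp \wedge E^\perp$). The proof has three parts. Injectivity and surjectivity come from $X = (X\wedge E)\vee(X\wedge E^\perp)$ together with $(Y \vee Z)\wedge E = Y$ for $Y\le E$ and $Z \le E^\perp$. Preservation of meets is immediate. Preservation of joins and complements follows because, by Foulis--Holland, any triple containing a central element distributes. Completeness of the factors is inherited from the completeness of $A^{!!}$, since meets and joins in an interval are computed in $A^{!!}$.

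Third, I would show that $[0,E]$ is Boolean. By construction, the elements $P\wedge E$ for $P\in A$ pairwise commute. The interval $[0,E]$ is generated by them: the map $X\mapsto X\wedge E$ is a complete homomorphism onto $[0,E]$, and it sends the generators $A$ of $A^{!!}$ to these elements. A complete orthomodular lattice generated by pairwise commuting elements is Boolean, because the commutant of a commuting set is closed under the operations, so the generated sublogic is a block. Hence $[0,E]$ is a complete Boolean algebra.

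Finally, I would show that $[0,E^\perp]$ has no non-trivial Boolean factor. Suppose $[0,E^\perp] \cong B\times C$ with $B$ Boolean and $B$ corresponding to a central $F\le E^\perp$ with $F\neq 0$. Then $F$ is central in $A^{!!}$ as well, and the elements $P\wedge F$ commute pairwise. Consequently $E\vee F \in A^{!}$ satisfies the defining condition of Eq.~\eqref{eq:com-def}, contradicting the maximality of $E$. I expect the main obstacle to be the first step, namely proving that $\com(A)\in A^{!!}$, that is, that $E$ commutes with all of $A^{!}$. This is where maximality must genuinely be used. My plan there is to show that, for $F\in A^{!}$, the element obtained by combining $E$ with its relativizations by $F$ and $F^\perp$ still satisfies the condition, so maximality forces $E = (E\wedge F)\vee(E\wedge F^\perp)$. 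Alternatively, I would cite the cited construction of Chevalier and Ozawa, where $\com(A)$ is given explicitly as an element of $A^{!!}$.
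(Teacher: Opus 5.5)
The paper gives no proof of Theorem~\ref{thm:decomp}; it imports the result from Chevalier and Ozawa. Your architecture is the standard one: centrality of $E=\com(A)$ in $A^{!!}$, splitting along a central element, Booleanness of $[0,E]$, and maximality against a Boolean summand of $[0,E^\perp]$. Your splitting lemma and your last step are sound once $E$ is central. Falling back on citing Chevalier and Ozawa is exactly the paper's position, but then the nontrivial content is imported, and the step you flag remains a real gap that your sketch cannot close. Relativizing by $F\in A^{!}$, that is, passing from $P\wedge E\commutes Q\wedge E$ to $P\wedge E\wedge F\commutes Q\wedge E\wedge F$, naturally needs $F\commutes E$, which is what you are trying to prove. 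Moreover, any element assembled from $E\wedge F$ and $E\wedge F^\perp$ lies below $E$. Maximality then yields only a trivial inequality and never forces $E=(E\wedge F)\vee(E\wedge F^\perp)$. To use maximality you would need an element above $E$ that stays in the defining set, and producing one is the whole difficulty. (In a Hilbert-space model it is invariance under conjugation by the symmetry $2F-1$, which has no lattice counterpart.)

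An explicit construction works instead. For finite $F\subseteq A$ let $\com(F)=\bigvee_{\alpha}\bigwedge_{P\in F}P^{\alpha(P)}$ over all $\alpha\colon F\to\{+,-\}$, with $P^{+}=P$ and $P^{-}=P^\perp$, and put $E_0=\bigwedge_F\com(F)$.
\begin{itemize}
\item Each $\com(F)$ is a lattice polynomial in elements of $A$, so it commutes with all of $A^{!}$, and hence $E_0\in A^{!!}$.
\item For $R\in F$ the minterms split into those below $R$ and those below $R^\perp$, so $\com(F)\commutes R$. Since $\com(F)$ decreases as $F$ grows, $E_0\in A^{!}$.
\item For $F\ni P,Q$, the elements $P\wedge\com(F)$ and $Q\wedge\com(F)$ are joins of pairwise orthogonal minterms, hence commute. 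Meeting them with $E_0$, which commutes with $P$, $Q$ and $\com(F)$, shows that $E_0$ satisfies the condition in Eq.~\eqref{eq:com-def}.
\item Conversely, any $E$ satisfying that condition is the join in $[0,E]$ of its relative minterms $E\wedge\bigwedge_{P\in F}P^{\alpha(P)}$. So $E\le\com(F)$ for every $F$, and $E\le E_0$.
\end{itemize}
Hence $E_0=\com(A)$ lies in the center $A^{!}\cap A^{!!}$. The existence of the maximum, which your later steps use silently, comes out as a by-product.

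Your third step has a second, independent gap. You argue that $A$ generates $A^{!!}$, so that $X\mapsto X\wedge E$ carries generators to generators. But in your first step you correctly treat $A^{!!}$ as the commutant of $A^{!}$, and this double commutant is in general larger than the complete subalgebra generated by $A$. If $\mathcal{L}$ is Boolean with more than two elements, then $A^{!}=A^{!!}=\mathcal{L}$ for every $A$. For $A=\{0\}$ the interval $[0,\com(A)]_{A^{!!}}=\mathcal{L}$ is then not generated by $\{0\}$.

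The conclusion survives by a commutant argument. The set $M=\{P\wedge E:P\in A\}\cup\{E\}$ pairwise commutes, so $M^{!!}\subseteq M^{!}=(M^{!!})^{!}$ is a commuting, hence Boolean, sublogic. Take $X\in A^{!!}$ with $X\le E$, and $Z\in M^{!}$. Testing against $P=(P\wedge E)\vee(P\wedge E^\perp)$ shows $(Z\wedge E)\vee E^\perp\in A^{!}$, so $X$ commutes with it. Hence $X$ commutes with its meet with $E$, which is $Z\wedge E$ by orthomodularity. Therefore $X$ commutes with $Z=(Z\wedge E)\vee(Z\wedge E^\perp)$. Thus $[0,E]_{A^{!!}}\subseteq M^{!!}$, its elements pairwise commute, and it is a complete Boolean algebra.
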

Here $[0,E]_{A^{!!}} \coloneqq \{P \in A^{!!} \mid 0 \le P \le E\}$.
Classical sublogics arise where the commutator is maximal, and genuinely quantum structure survives in its orthogonal complement.
We apply this to the two free generators $x$ and $y$.
The lattice they generate is the free orthomodular lattice on two generators, denoted $F_2$, which has ninety-six elements~\cite{beran1985}.
Theorem~\ref{thm:decomp} with $A=\{x,y\}$ identifies the Boolean factor $[0,\com(x,y)]$ with the free Boolean algebra on two generators, namely $2^4$, and the residual factor $[0,\com(x,y)^\perp]$ with the six-element modular ortholattice $\MO_2$.
The decomposition used throughout is therefore
\eq{
  F_2 \cong \MO_2 \times 2^4. \label{eq:decomp-f2}
}
The ninety-six element count and the enumeration are due to Beran~\cite{beran1985}, and the identification of the two factors as $\MO_2$ and $2^4$ follows from Theorem~\ref{thm:decomp} together with that enumeration, not from either ingredient alone.
Quantum set theory~\cite{ozawa2016,ozawa2017,ozawa2021a} provides a semantics for orthomodular-valued truth in which the commutator controls the extent of classical behavior, and we return to that interface, through the projection of Section~\ref{sec:projection}, in Section~\ref{sec:discussion}.

\subsection{The quantum conditionals}
\label{sec:conditionals}
A Boolean algebra has a single material conditional $P \to Q = \neg P \vee Q$, but an orthomodular lattice has no canonical analogue.
Hardegree~\cite{hardegree1981} fixed the choice by minimal requirements, which we state precisely.
An \emph{ortholattice polynomial} in the variables $P,Q$ is a term built from $P$, $Q$ by the operations $\wedge$, $\vee$, ${}^\perp$.
\begin{definition}[Quantum material conditional~\cite{hardegree1981}]
\label{def:qmc}
A \emph{quantum material conditional} on a logic $\mathcal{L}$ is a binary operation $\Rightarrow$ for which there is an ortholattice polynomial $p(P,Q)$ with $P \Rightarrow Q = p(P,Q)$ for all $P, Q \in L$, and which satisfies, for all $P, Q \in L$, the three minimal implicative conditions
\begin{enumerate}[label=\textup{(\arabic*)},leftmargin=2.4em]
  \item[\textup{(E)}] $P \Rightarrow Q = 1$ if and only if $P \leq Q$;
  \item[\textup{(MP)}] $P \wedge (P \Rightarrow Q) \leq Q$ \quad (modus ponens);
  \item[\textup{(MT)}] $Q^\perp \wedge (P \Rightarrow Q) \leq P^\perp$ \quad (modus tollens).
\end{enumerate}
\end{definition}
\begin{theorem}[Hardegree~\cite{hardegree1981}; Ozawa~\cite{ozawa2021a,ozawa2026}]
\label{thm:hardegree}
A logic admits exactly three quantum material conditionals in the sense of Definition~\ref{def:qmc}, and they are the Sasaki conditional, the contrapositive Sasaki conditional, and the relevance conditional,
\eq{
  P \impS Q &= P^\perp \vee (P \wedge Q), \nonumber\\
  P \impC Q &= (P \vee Q)^\perp \vee Q, \nonumber\\
  P \impR Q &= (P \wedge Q) \vee (P^\perp \wedge Q) \vee (P^\perp \wedge Q^\perp). \nonumber
}
Every binary operation given by an ortholattice polynomial and satisfying \textup{(E)}, \textup{(MP)}, and \textup{(MT)} equals one of these three, and conversely each of the three satisfies the conditions.
\end{theorem}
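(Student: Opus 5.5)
The plan is to reduce the classification to a finite computation in $F_2 \cong \MO_2 \times 2^4$ (Eq.~\eqref{eq:decomp-f2}), and to prove the two directions separately.

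For \emph{necessity}, let $p(P,Q)$ be any ortholattice polynomial satisfying (E), (MP), (MT) in every logic. Evaluated at the free generators $x,y$, it defines an element $p(x,y)\in F_2$. Since $F_2$ is free on $x,y$, an equation between two-variable polynomials that holds at $x,y$ holds in every logic. So it suffices to show that $p(x,y)$ coincides with the value of one of the three displayed conditionals. Under Eq.~\eqref{eq:decomp-f2}, $p(x,y)$ is a pair (context, bit vector), and the homomorphic images of $F_2$ onto its two factors are logics in which the conditions must still hold.

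\emph{Boolean coordinate.} In $2^4$, the generators are sent to free Boolean generators. Taking $P,Q$ to be these generators and also various specialisations, (E) and (MP) force the coordinate to be the material conditional $\neg x \vee y$. Concretely, (MP) kills the bit at the atom $x\wedge y^\perp$, and (E) applied to the specialisations $P\le Q$ together with monotonicity of the polynomial shows that every other bit is $1$. One way to get this is to note that in the two-element Boolean algebra the truth table is forced.

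\emph{Context coordinate.} Here $x,y$ go to two atoms $a,b$ of $\MO_2$ that are neither equal nor complementary, so $p(a,b)\in\{0,a,a^\perp,b,b^\perp,1\}$. I would test the conditions inside $\MO_2$ itself:
\begin{itemize}
\item (E) with $a\not\le b$ excludes $1$;
\item (MP) requires $a\wedge p\le b$, i.e.\ $a\wedge p=0$, which excludes $a$;
\item (MT) requires $b^\perp\wedge p\le a^\perp$, i.e.\ $b^\perp\wedge p=0$, which excludes $b^\perp$.
\end{itemize}
This leaves $\{0,a^\perp,b\}$. A direct computation in $\MO_2$ gives
\begin{itemize}
\item the Sasaki conditional $a^\perp\vee(a\wedge b)=a^\perp$;
\item the contrapositive Sasaki conditional $(a\vee b)^\perp\vee b=b$;
\item the relevance conditional $(a\wedge b)\vee(a^\perp\wedge b)\vee(a^\perp\wedge b^\perp)=0$.
\end{itemize}
All three have the material conditional as Boolean coordinate. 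Hence $p(x,y)$ equals exactly one of them in $F_2$, and therefore in every logic by freeness. Since these three elements of $F_2$ are distinct, there are exactly three such conditionals.

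For \emph{sufficiency}, I would verify the conditions directly from the orthomodular law.

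\emph{Sasaki conditional.} Since $P\wedge Q\le P$, orthomodularity gives $P\wedge(P^\perp\vee(P\wedge Q))=P\wedge Q\le Q$, which is (MP). For (E): if $P\le Q$ the conditional equals $P^\perp\vee P=1$; conversely, if it equals $1$, then meeting with $P$ and using the previous identity gives $P=P\wedge Q$. For (MT), I would show $Q^\perp\wedge(P^\perp\vee(P\wedge Q))\le P^\perp$. Here $P\wedge Q$ and $P^\perp$ both commute with $Q^\perp$ and with $P\wedge Q$, so the Foulis--Holland theorem lets me distribute. This yields $(Q^\perp\wedge P^\perp)\vee(Q^\perp\wedge P\wedge Q)=Q^\perp\wedge P^\perp$.

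\emph{Contrapositive Sasaki conditional.} It is obtained from the Sasaki case by the substitution $P\mapsto Q^\perp$, $Q\mapsto P^\perp$, which swaps (MP) and (MT) and preserves (E).

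\emph{Relevance conditional.} It lies below both of the others, which gives (MP) and (MT). Condition (E) follows because when $P\le Q$ the commutator $\com(P,Q)=1$ by Eq.~\eqref{eq:com-two}, and the relevance conditional then reduces to $P^\perp\vee Q$.

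I expect the main obstacle to be the passage from necessity in special logics to uniqueness everywhere. (E) is a quasi-identity, so one must be careful that evaluating in $F_2$ and its factor images legitimately pins down $p(x,y)$. A second delicate point is the commutativity bookkeeping needed to apply Foulis--Holland in the (MT) verification.
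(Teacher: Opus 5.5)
You have a correct, self-contained proof, by a different route from the paper. The paper does not prove this theorem at all. It cites Hardegree and Ozawa, the latter working through the Kotas--Ozawa normal form $\beta(P,Q)\vee(\varepsilon\wedge\com(P,Q)^\perp)$ of Theorem~\ref{thm:kotas}, and only confirms by computation (Appendix~\ref{app:examples}) that the three conditionals sit at Nos.~78, 46 and 14.

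Your reduction by freeness to the single element $p(x,y)\in F_2$, read off factor by factor, is legitimate. Your context computation is exactly the $\MO_2$-shadow of Ozawa's argument. Once the Boolean part is the material conditional, the six choices $\varepsilon=0,P,P^\perp,Q,Q^\perp,1$ give $p(x,y)$ the contexts $0,a,a^\perp,b,b^\perp,1$ (Nos.~14, 30, 78, 46, 62, 94). (MP) removes $a$ and $1$, (MT) removes $b^\perp$ and $1$, and the survivors are the relevance, Sasaki and contrapositive Sasaki conditionals. Your route makes ``exactly three'' visible as ``exactly three contexts survive'' using nothing beyond the paper's calculus. It does, however, depend on the identification $x=(a,1100)$, $y=(b,1010)$, which itself rests on Beran's enumeration. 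The Kotas--Ozawa normal form, by contrast, applies to an arbitrary pair in an arbitrary logic.

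A few local repairs are needed.
\begin{itemize}
\item \emph{Boolean coordinate.} Ortholattice polynomials are not monotone (think of $P^\perp$), so drop that route and keep the one you give next. (E) fixes the truth table on the two-element algebra, and polynomial operations on $2^4$ act coordinatewise. The atom coordinates $(1,1),(1,0),(0,1),(0,0)$ of $(x,y)$ then yield $1011$.
\item \emph{Sasaki (MT).} The Foulis--Holland hypothesis is that $P\wedge Q$ commutes with both $P^\perp$ and $Q^\perp$, which holds because it lies below $P$ and below $Q$. Your sentence reads as if $P^\perp$ commutes with $Q^\perp$, which need not hold.
\item \emph{Relevance (E).} You prove only the ``if'' half. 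The ``only if'' half follows from $P\impR Q\le P\impS Q$ and the Sasaki case.
\end{itemize}

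Your worry about (E) being a quasi-identity dissolves. Given (MP), $P\Rightarrow Q=1$ already forces $P\le Q$, and the remaining half of (E) is the identity $p(P\wedge Q,Q)=1$. So all the conditions are equations, and they pass to subalgebras and homomorphic images. That is exactly what you need if Definition~\ref{def:qmc} is read inside one fixed non-Boolean logic: there, two noncommuting elements generate a subalgebra with $\MO_2$ as a direct factor, and your $\MO_2$ test applies. In a Boolean logic the three operations coincide, so ``exactly three'' is necessarily a statement about polynomials, which is what your proof establishes.
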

The three conditionals agree with the classical conditional $P^\perp \vee Q$ whenever $P \commutes Q$, and they differ only through the commutator.
Ozawa~\cite{ozawa2021a,ozawa2026} made the dependence explicit in the identities
\eq{
  P \impS Q &= (P \impR Q) \vee \bigl(P^\perp \wedge \com(P,Q)^\perp\bigr), \nonumber\\
  P \impC Q &= (P \impR Q) \vee \bigl(Q \wedge \com(P,Q)^\perp\bigr), \nonumber
}
which exhibit the Sasaki and contrapositive Sasaki conditionals as the relevance conditional augmented on the non-commutative part $\com(P,Q)^\perp$.
The relaxation of (E) to the local Boolean condition, that $P \leq Q$ imply $P \Rightarrow Q = P^\perp \vee Q$, admits a slightly larger family, classified as follows.
\begin{theorem}[Kotas--Ozawa~\cite{kotas1967,ozawa2021a,ozawa2026}]
\label{thm:kotas}
A logic admits exactly six binary operations given by ortholattice polynomials and satisfying the local Boolean condition. They are the operations $P \Rightarrow_j Q = \mathrm{b}_n(P,Q) \vee \alpha_j$ for $j = 0,\dots,5$, where $\mathrm{b}_n(P,Q) = (P \wedge Q) \vee (P^\perp \wedge Q) \vee (P^\perp \wedge Q^\perp)$ is the disjunctive normal form of the classical conditional and $\alpha_j$ ranges over the six elements
\eq{
  0, \quad P \wedge \com(P,Q)^\perp, \quad Q \wedge \com(P,Q)^\perp, \quad P^\perp \wedge \com(P,Q)^\perp, \quad Q^\perp \wedge \com(P,Q)^\perp, \quad \com(P,Q)^\perp. \nonumber
}
\end{theorem}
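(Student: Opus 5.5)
Since any binary ortholattice polynomial in $P,Q$ takes values in the sublogic $\{P,Q\}^{!!}$, and in the free case this is $F_2 \cong \MO_2 \times 2^4$ by Eq.~\eqref{eq:decomp-f2}, I will reduce the classification to a computation inside $F_2$. An operation $P \Rightarrow Q = p(P,Q)$ is the same thing as an element $p(x,y)$ of $F_2$; because $F_2$ is free, any equational identity that holds for $p(x,y)$ in $F_2$ transfers to every logic by substitution. The plan is to work in the product $\MO_2 \times 2^4$ coordinatewise, with $x = (a, \chi_x)$ and $y = (b, \chi_y)$, where $a,b$ are two non-comparable, non-complementary atoms-and-coatoms pattern elements of $\MO_2$. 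Here $\com(x,y) = (0,1)$, so $\com(x,y)^\perp = (1,0)$.

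\emph{Step 1 (Boolean coordinate).} The local Boolean condition, applied to pairs with $P \le Q$, is only a constraint on comparable inputs, so I first fix the $2^4$ coordinate. In the Boolean factor the generators commute, and I want to show the $2^4$ component of $p(x,y)$ must equal that of $x^\perp \vee y$. I would do this by specializing to commuting substitutions. For instance, take $Q = P \vee R$ with $R \perp P$, or use the free Boolean algebra's universal property: a homomorphism $F_2 \to 2^4$ kills the $\MO_2$ factor. Since $2^4$ is generated by commuting elements, the polynomial must act there as a Boolean function $f(P,Q)$ satisfying $f(P,Q) = P^\perp \vee Q$ whenever $P \le Q$ in a Boolean algebra. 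I then need to argue that $f$ equals the material conditional on all inputs. The plan is to evaluate on $2 = \{0,1\}$ and check the four pairs: three are comparable, and $(1,0)$ gives $f(1,0) \le$ … This last case is not forced by comparability alone, and it is exactly where I expect trouble. I would resolve it by taking $P=Q\vee R$ style substitutions in $2^2$ showing $f(1,0)=0$ (e.g.\ use $P\le P$ gives $f(P,P)=1$ and monotonicity-free tricks via $P\wedge Q \le Q$). This is the step I expect to be the main obstacle, so I would check it against the known form $\mathrm{b}_n \vee \alpha_j$, all of which reduce to $P^\perp\vee Q$ on the Boolean part.

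\emph{Step 2 ($\MO_2$ coordinate).} Write $p(x,y) = (m, \mathrm{b})$ with $\mathrm{b}$ fixed by Step~1. Since $\mathrm{b}_n(x,y) = (0,\mathrm{b})$ (each meet of two distinct $\MO_2$ generators or their complements is $0$), every candidate has the form $\mathrm{b}_n \vee (m,0)$ with $m \in \MO_2$. Only six choices of $m$ are available, and the elements $x\wedge\com^\perp$, $y\wedge\com^\perp$, $x^\perp\wedge\com^\perp$, $y^\perp\wedge\com^\perp$, $0$, $\com^\perp$ realize exactly $m = a, b, a^\perp, b^\perp, 0, 1$. So at most six operations arise, which proves the upper bound.

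\emph{Step 3 (each satisfies the condition).} For $P \le Q$ we have $P \commutes Q$, so $\com(P,Q) = 1$ by Eq.~\eqref{eq:com-two}, whence every $\alpha_j = 0$. Moreover $\mathrm{b}_n(P,Q) = P^\perp \vee Q$ by distributivity in the Boolean sublogic $\{P,Q\}^{!!}$ from Theorem~\ref{thm:decomp}. Finally, I verify that the six operations are pairwise distinct because their values at $(x,y)$ differ, and that they are ortholattice polynomials because $\com$ is one by Eq.~\eqref{eq:com-two}.
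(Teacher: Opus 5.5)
The obstacle you flag in Step~1 is fatal, not technical. Under the reading you adopt, where the condition constrains only pairs with $P\le Q$, the $2^4$-coordinate is not determined and the statement is false.

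Since $P\le Q$ already gives $P^\perp\vee Q\ge P^\perp\vee P=1$, that condition says only that $P\le Q$ implies $P\Rightarrow Q=1$. The polynomial $P\vee P^\perp$ satisfies this in every logic, yet it is not one of the six:
\begin{enumerate*}[label=(\roman*)]
\item in $F_2$ it is $(1,1111)$, while every $\mathrm{b}_n\vee\alpha_j$ evaluates at $(x,y)$ to some $(m,1011)$;
\item on the two-element logic it takes the value $1$ at $(1,0)$, where all six take $0$.
\end{enumerate*}
The same holds for $\com(P,Q)=(0,1111)$.

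Your own evaluation argument in fact settles the comparability version completely. For $P\le Q$ one has $\com(P,Q)^\perp=0$, so $(m,\beta)$ evaluates to the Boolean normal form $\beta(P,Q)$. This equals $1$ on all comparable pairs if and only if $\beta\in\{1011,1111\}$. Hence exactly twelve elements qualify, namely $(m,1011)$ and $(m,1111)$ for $m\in\MO_2$. No substitution trick can force $f(1,0)=0$, because the constant $1$ agrees with $P^\perp\vee Q$ on every comparable pair.

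The missing idea is the correct hypothesis. Kotas's count of six is for Ozawa's local Boolean condition, whose antecedent is commutation: $P\commutes Q$ implies $P\Rightarrow Q=P^\perp\vee Q$. The ``$P\le Q$'' in the sentence introducing the theorem is a misstatement of this; the later remarks about agreement precisely when $x\commutes y$ presuppose the commutation form.

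With that hypothesis Step~1 is one line. The map $\pi_{\mathrm{forget}}$ is a homomorphism of ortholattices, and $\pi_{\mathrm{forget}}(x)=1100$ and $\pi_{\mathrm{forget}}(y)=1010$ commute in the logic $2^4$. So the $2^4$-coordinate of $p(x,y)$ is $p(1100,1010)=0011\vee 1010=1011$.

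Your Steps~2 and~3 then give a complete proof; the paper itself only cites Kotas and Ozawa. Step~3 should be stated for commuting rather than comparable pairs. Your argument there already used only $P\commutes Q$: it gives $\com(P,Q)=1$, hence $\alpha_j=0$, and $\mathrm{b}_n(P,Q)=P^\perp\vee Q$ by distributivity in the Boolean sublogic. So nothing else changes.

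Keep reading ``exactly six'' as six distinct elements of $F_2$, that is, uniformly over all logics, as your freeness argument does. In a Boolean logic the six operations coincide.
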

The members $j = 0, 2, 3$ are the relevance, contrapositive Sasaki, and Sasaki conditionals, and the member $j = 5$ is the classical conditional $P^\perp \vee Q$, which satisfies the local Boolean condition but violates (E) in a non-Boolean lattice.
The two remaining members, written out for the generators, are
\eq{
  x \Rightarrow_1 y &= (x^\perp \vee y) \wedge \bigl(x \vee (x^\perp \wedge y) \vee (x^\perp \wedge y^\perp)\bigr), \nonumber\\
  x \Rightarrow_4 y &= (x^\perp \vee y) \wedge \bigl(y^\perp \vee (x^\perp \wedge y) \vee (x \wedge y)\bigr). \nonumber
}
Theorem~\ref{thm:kotas} also fixes the relation between the ninety-six two-variable operations and the sixteen classical truth functions.
Writing $\beta(P,Q)$ for the disjunctive normal form of the two-variable Boolean function that an operation realizes, every such operation has the normal form $\beta(P,Q) \vee (\varepsilon \wedge \com(P,Q)^\perp)$, with $\beta$ ranging over the sixteen Boolean parts and $\varepsilon \in \{0, P, P^\perp, Q, Q^\perp, 1\}$, giving $16 \times 6 = 96$ operations~\cite{kotas1967,ozawa2021a}.
The sixteen operations with $\varepsilon = 0$ coincide with their Boolean part $\beta$ everywhere, and the remaining eighty differ from it only on $\com(P,Q)^\perp$, so that they agree with their classical counterpart precisely when $x \commutes y$.

For each quantum material conditional the \emph{dual conjunction} is $P \otimes Q \coloneqq (P \Rightarrow Q^\perp)^\perp$, which keeps the duality between bounded universal and existential quantifiers in quantum set theory~\cite{ozawa2021a,ozawa2026}.
The three dual conjunctions are
\eq{
  P \dcS Q &= (P \wedge Q) \vee \bigl(P \wedge \com(P,Q)^\perp\bigr), \nonumber\\
  P \dcC Q &= (P \wedge Q) \vee \bigl(Q \wedge \com(P,Q)^\perp\bigr), \nonumber\\
  P \dcR Q &= (P \wedge Q) \vee \com(P,Q)^\perp. \label{eq:dual-conj}
}
For the generators one has $x \dcS y = x \wedge (x^\perp \vee y)$ and $x \dcC y = (x \vee y^\perp) \wedge y$.
The relevance dual conjunction is $x \dcR y = (x^\perp \vee y) \wedge (x \vee y^\perp) \wedge (x \vee y)$, which equals $(x \wedge y) \vee \com(x,y)^\perp$.
In the calculus of Section~\ref{sec:stratification} this element is $(1,1000)$, which is element No.~82 of the enumeration in Appendix~\ref{app:enumeration}, and it is distinct from $x \wedge y = (0,1000)$.
The bare equality $P \dcR Q = P \wedge Q$ holds only in the commutative case, where $\com(P,Q)^\perp = 0$, and the term $\com(P,Q)^\perp$ is present otherwise, as the normal form of Theorem~\ref{thm:kotas} requires.
All three dual conjunctions reduce to $P \wedge Q$ when $P \commutes Q$.

In the enumeration of Beran~\cite{beran1985} the Sasaki, contrapositive Sasaki, and relevance conditionals carry the numbers $78$, $46$, and $14$, with
\eq{
  x \impS y = x^\perp \vee (x \wedge y), \quad
  x \impC y = y \vee (y^\perp \wedge x^\perp), \quad
  x \impR y = (x \wedge y) \vee (x^\perp \wedge y) \vee (x^\perp \wedge y^\perp). \nonumber
}
We adopt Beran's numbering, and we confirm the position of these three operations, and of every other element, by direct computation in the calculus of Section~\ref{sec:stratification}, with the full check recorded in Appendix~\ref{app:enumeration}.

\section{The Context Calculus and the Six-Layer Stratification}
\label{sec:stratification}
This section develops the computational core of the paper.
We write each element of $F_2$ as a pair consisting of a context and a bit vector and record the resulting component-by-component operations in Definition~\ref{def:operations}.
We then classify the six layers of the lattice by commutativity in Proposition~\ref{prop:commute}, and we describe in Theorem~\ref{thm:context-dynamics} how meet and join move an element from one layer to another.
These facts are the input to the duality results of Section~\ref{sec:dual-symmetry} and to the projection of Section~\ref{sec:projection}.

By the decomposition \eqref{eq:decomp-f2}, the ninety-six elements of $F_2$ split into six layers of sixteen elements each, one layer for each of the six contexts of $\MO_2$.
We write the six contexts as $0,a,b,a^\perp,b^\perp,1$ and label the layers $B01,\dots,B06$ as in Table~\ref{tab:six-layer}.
The kernel $B01$ has context $0$, the four single-context layers $B02$--$B05$ have contexts $a,b,b^\perp,a^\perp$, and the layer $B06$ has context $1$.
The full enumeration is given in Appendix~\ref{app:enumeration}.

We keep one distinction in view throughout, because it is easy to lose.
The symbol $c$ always names a value of the context factor $\MO_2$, whereas an element of $F_2$ is the pair $(c,\bvec b)$.
In particular the top context $c=1$ of $\MO_2$, viewed as an element of $F_2$, is the commutator complement $\com(x,y)^\perp = (1,0000)$, which differs from the global top $1_{F_2} = (1,1111)$.
\begin{table}[htbp]
\caption{\label{tab:six-layer}The six-layer stratification of $F_2 \cong \MO_2 \times 2^4$. The context is the value of the $\MO_2$ factor; the elements of contexts $0$ and $1$ are central in $F_2$, and the elements of the four atomic contexts are non-central, by Proposition~\ref{prop:commute}. Each key example is identified by its Beran number.}
\centering
\begin{tabular}{c c c l}
\hline\hline
Layer & Beran range & Context $c$ & Key example \\
\hline
$B01$ & 01--16 & $0$ & No.~14: relevance conditional $x \impR y$ \\
$B02$ & 17--32 & $a = x \wedge \com(x,y)^\perp$ & No.~18: Sasaki conjunction $x \dcS y$ \\
$B03$ & 33--48 & $b = y \wedge \com(x,y)^\perp$ & No.~46: contrapositive Sasaki conditional $x \impC y$ \\
$B04$ & 49--64 & $b^\perp = y^\perp \wedge \com(x,y)^\perp$ & No.~58: generator complement $y^\perp$ \\
$B05$ & 65--80 & $a^\perp = x^\perp \wedge \com(x,y)^\perp$ & No.~78: Sasaki conditional $x \impS y$ \\
$B06$ & 81--96 & $1 = \com(x,y)^\perp$ & No.~94: classical conditional $x \to y$ \\
\hline\hline
\end{tabular}
\end{table}

For a reader meeting the stratification for the first time, the picture is as follows.
Each element carries two independent pieces of data, a context and a bit vector.
The bit vector is ordinary Boolean content, the value of a two-variable formula on the four Boolean cases.
The context records which experimental frame the proposition belongs to, with $0$ and $1$ the two context-neutral (central) values and $a,b,a^\perp,b^\perp$ the four single-generator frames.
The six layers of Table~\ref{tab:six-layer} are the six possible contexts, and everything that follows turns on how the operations move an element between them.

\subsection{The component-by-component calculus}
\label{sec:comp-rep}
Each element $P \in F_2$ has a unique representation $P = (c, \bvec b)$ with context $c \in \MO_2$ and bit vector $\bvec b = (b_1,b_2,b_3,b_4) \in 2^4$.
The four bits record the value of $P$ on the four atoms of the Boolean subalgebra generated by $x$ and $y$, taken in the order $(x \wedge y,\ x \wedge y^\perp,\ x^\perp \wedge y,\ x^\perp \wedge y^\perp)$.
The lattice operations act on the two factors independently.
\begin{definition}[Operations on $(c,\bvec b)$]
\label{def:operations}
For $P_1 = (c_1, \bvec b_1)$ and $P_2 = (c_2, \bvec b_2)$,
\eq{
  P_1 \wedge P_2 = (c_1 \wedge_{\MO_2} c_2,\ \bvec b_1 \wedge_{2^4} \bvec b_2), \quad
  P_1 \vee P_2 = (c_1 \vee_{\MO_2} c_2,\ \bvec b_1 \vee_{2^4} \bvec b_2), \quad
  P^\perp = (c^{\perp}_{\MO_2},\ \bvec b^{\perp}_{2^4}), \nonumber
}
where $\wedge_{\MO_2}, \vee_{\MO_2}, {}^{\perp}_{\MO_2}$ are the operations of $\MO_2$ and $\wedge_{2^4}, \vee_{2^4}, {}^{\perp}_{2^4}$ are the bitwise AND, OR, and NOT on four bits.
\end{definition}
Definition~\ref{def:operations} is the statement that the isomorphism \eqref{eq:decomp-f2} is an isomorphism of orthocomplemented lattices, so that all ninety-six elements can be enumerated and every operation can be computed in constant time.
Worked examples are collected in Appendix~\ref{app:examples}.
For example, for $P_1 = (a, 1100) \in B02$ and $P_2 = (b, 1010) \in B03$ the context meet is $a \wedge_{\MO_2} b = 0$, because $a$ and $b$ are distinct atoms of $\MO_2$, and the bitwise meet is $1100 \wedge 1010 = 1000$, so that $P_1 \wedge P_2 = (0, 1000)$ lies in $B01$.
Combining elements from two different non-central contexts can thus force a descent to the Boolean kernel.

\subsection{The Boolean kernel and the central superposition layer}
\label{sec:b01-b06}
The kernel $B01$, with context $0$, consists of the sixteen elements $(0,\bvec b)$, and these are exactly the two-variable Boolean operations on commuting generators.
Representative elements are listed in Table~\ref{tab:b01}.
The three quantum conditionals coincide on $B01$, as the next corollary records.
\begin{table}[htbp]
\caption{\label{tab:b01}Selected elements of the Boolean kernel $B01$. The last column names the two-variable Boolean operation that has the same bit vector.}
\centering
\begin{tabular}{c l c l}
\hline\hline
Beran No. & Formula & Vector & Boolean operation \\
\hline
01 & $x \wedge x^\perp$ & $(0,0000)$ & falsity \\
02 & $x \wedge y$ & $(0,1000)$ & conjunction \\
06 & $(x \wedge y) \vee (x \wedge y^\perp)$ & $(0,1100)$ & projection onto $x$ \\
14 & $x \impR y$ & $(0,1011)$ & material conditional \\
16 & $\textstyle\bigvee$ of the four atoms & $(0,1111)$ & truth (top of $B01$) \\
\hline\hline
\end{tabular}
\end{table}
\begin{corollary}[Classical agreement in $B01$]
\label{cor:b01-agreement}
For all $P, Q \in B01$ the three quantum conditionals coincide, and they equal the classical conditional, $P \impS Q = P \impC Q = P \impR Q = P^\perp \vee Q$.
\end{corollary}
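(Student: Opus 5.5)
The plan is to compute all four expressions in the context--bit-vector calculus of Definition~\ref{def:operations} and compare them componentwise. Write $P=(0,\bvec p)$ and $Q=(0,\bvec q)$, both in $B01$. The calculus reduces each conditional to a pair: an $\MO_2$-polynomial evaluated at the contexts $(0,0)$, and the same polynomial evaluated bitwise on $(\bvec p,\bvec q)$.

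On the bit-vector side, $2^4$ is Boolean, so any two of its elements commute. There the three formulas of Theorem~\ref{thm:hardegree} reduce to the classical $\bvec p^\perp\vee\bvec q$:
\begin{itemize}
\item for the Sasaki conditional, $\bvec p^\perp\vee(\bvec p\wedge\bvec q)=\bvec p^\perp\vee\bvec q$, by distributivity;
\item for the contrapositive Sasaki conditional, $(\bvec p\vee\bvec q)^\perp\vee\bvec q=(\bvec p^\perp\wedge\bvec q^\perp)\vee\bvec q=\bvec p^\perp\vee\bvec q$;
\item for the relevance conditional, $(\bvec p\wedge\bvec q)\vee(\bvec p^\perp\wedge\bvec q)\vee(\bvec p^\perp\wedge\bvec q^\perp)=\bvec q\vee\bvec p^\perp$, after grouping the first two terms to $\bvec q$ and absorbing.
\end{itemize}
These are routine Boolean identities.

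The context component needs more care, and I expect it to be the only point requiring attention. With both contexts equal to $0$ we have $0^\perp=1$ in $\MO_2$, so the context is not simply $0$. I would evaluate each polynomial at $(c_1,c_2)=(0,0)$:
\begin{itemize}
\item Sasaki: $1\vee(0\wedge0)=1$;
\item contrapositive Sasaki: $(0\vee0)^\perp\vee0=1$;
\item relevance: $(0\wedge0)\vee(1\wedge0)\vee(1\wedge1)=1$;
\item classical: $0^\perp\vee0=1$.
\end{itemize}
All four context components therefore equal $1$. So all four expressions equal $(1,\bvec p^\perp\vee\bvec q)$, and the corollary follows. The conditional of two kernel elements then lands in $B06$ rather than in $B01$, which does not affect the claimed equalities.

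A calculation-free alternative runs through Ozawa's identities after Theorem~\ref{thm:hardegree}. Elements of $B01$ are central, with context $0$, by the discussion preceding Proposition~\ref{prop:commute}, so $P\commutes Q$. Central elements commute with everything, so $\com(P,Q)=1_{F_2}$ and $\com(P,Q)^\perp=0$. The correction terms then vanish, which gives $P\impS Q=P\impC Q=P\impR Q$. Since the three conditionals agree with $P^\perp\vee Q$ whenever $P\commutes Q$, all four expressions coincide. I would present the componentwise computation as the main proof and mention this route as a cross-check. The main thing to guard against is the tempting but wrong claim that the result stays in context $0$.
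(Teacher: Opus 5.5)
Your proof is correct, but your main argument takes a different route from the paper's. You evaluate the three Hardegree polynomials and $P^\perp\vee Q$ directly in the product. You read off the $\MO_2$ component at the contexts $(0,0)$ and the $2^4$ component from elementary Boolean identities, so all four expressions come out as $(1,\neg\bvec p\vee\bvec q)$.

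The paper instead works through the commutator. It evaluates Eq.~\eqref{eq:com-two} in the calculus and observes that $P^\perp\wedge Q^\perp$ has context $1$, which is the same $0^\perp=1$ subtlety you flag. This gives $\com(P,Q)=(1,1111)=1_{F_2}$. From the normal form of Theorem~\ref{thm:kotas} it then concludes that every augmentation term $\alpha_j$ vanishes, so all six conditionals collapse to $\mathrm{b}_n(P,Q)=P^\perp\vee Q$.

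Your route is self-contained and needs neither Ozawa's identities nor Theorem~\ref{thm:kotas}. It also verifies the final equality $\mathrm{b}_n(P,Q)=P^\perp\vee Q$, which the paper uses without computing it. And it exhibits the common value explicitly as an element of $B06$. The paper's route explains the agreement structurally, as the vanishing of $\com(P,Q)^\perp$. It handles all six Kotas--Ozawa operations in one stroke, and it applies unchanged to any commuting pair in any logic, not only to the kernel of $F_2$.

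Your cross-check paragraph is essentially the paper's argument. One small correction there: the centrality of $B01$ is the content of Proposition~\ref{prop:commute} itself, not of the discussion preceding it.
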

\begin{proof}
By Proposition~\ref{prop:commute} every element of $B01$ is central, so any $P,Q \in B01$ commute, and for commuting elements the central commutator is the global top, $\com(P,Q) = 1_{F_2} = (1,1111)$.
One sees this directly from Eq.~\eqref{eq:com-two}.
Writing $P = (0,\bvec b)$ and $Q = (0,\bvec b')$, the meets $P \wedge Q$, $P \wedge Q^\perp$, and $P^\perp \wedge Q$ have context $0$, whereas the fourth meet $P^\perp \wedge Q^\perp$ has context $1$, because the orthocomplements $P^\perp$ and $Q^\perp$ lie in $B06$ and have context $1$.
The four bit vectors of these meets are $\bvec b \wedge \bvec b'$, $\bvec b \wedge \neg\bvec b'$, $\neg\bvec b \wedge \bvec b'$, and $\neg\bvec b \wedge \neg\bvec b'$, whose join is $1111$.
The join of the four meets is therefore $(0 \vee 0 \vee 0 \vee 1,\ 1111) = (1,1111)$, an absolute identity in $F_2$ and not a relative one inside $B01$.
Hence $\com(P,Q)^\perp = 0_{F_2} = (0,0000)$, every augmentation term $\alpha_j = \varepsilon \wedge \com(P,Q)^\perp$ of Theorem~\ref{thm:kotas} vanishes, and all six conditionals reduce to $\mathrm{b}_n(P,Q) = P^\perp \vee Q$.
\end{proof}
Within $B01$ classical reasoning is internally stable, since the conditionals agree and the bit vector obeys the ordinary two-variable truth-functional calculus.
The kernel is not globally complete inside $F_2$, however, because the global top $1_{F_2}$ has context $1$ and so lies in $B06$ (Beran No.~96).
Reaching the whole lattice forces one to leave the kernel.

The layer $B06$, with context $1 = \com(x,y)^\perp$, is the second central layer.
Its elements do not fix a single generator-context but lie in the region where the complementary contexts are jointly present.
The term ``superposition'' here is internal to the lattice and algebraic.
It records that an element of $B06$ has a nontrivial image in each of the complementary single contexts, in the sense made precise after Theorem~\ref{thm:vertical}, and it is not a claim about physical quantum superposition, which enters only with the Born statistical formula of Section~\ref{sec:physical}.
Table~\ref{tab:b06} lists representative elements, including the global top and the commutator complement.
The next proposition records a feature that motivates the term ``superposition layer.''
The two central layers are insulated from the single-context layers, since the center $B01 \cup B06$ is closed under all lattice operations, so that no context-neutral datum can synthesize a specific experimental context.
\begin{table}[htbp]
\caption{\label{tab:b06}Selected elements of the central superposition layer $B06$, including the global top $1_{F_2}$ and the commutator complement $\com(x,y)^\perp$. The last column names the two-variable Boolean operation with the same bit vector. The $B06$ element with a given bit vector is distinct from the $B01$ element that carries the same Boolean operation, since the two differ in their context; in particular $\com(x,y)^\perp = (1,0000)$ is the bottom of $B06$, not the global bottom $0_{F_2} = (0,0000)$, which lies in $B01$.}
\centering
\begin{tabular}{c l c l}
\hline\hline
Beran No. & Formula & Vector & Boolean operation \\
\hline
81 & $(x \vee y) \wedge (x \vee y^\perp) \wedge (x^\perp \vee y^\perp) \wedge (x^\perp \vee y)$ & $(1,0000)$ & falsity (bit vector); $\com(x,y)^\perp$, bottom of $B06$ \\
94 & $x^\perp \vee y$ & $(1,1011)$ & material conditional \\
96 & $1$ & $(1,1111)$ & truth ($={}$global top $1_{F_2}$) \\
\hline\hline
\end{tabular}
\end{table}
\begin{proposition}[Closure of the central layers]
\label{prop:b06-classical}
The union $B01 \cup B06$ of the two central layers is a subalgebra of $F_2$, closed under $\wedge$, $\vee$, and ${}^\perp$. Consequently no finite sequence of lattice operations applied to elements of $B01 \cup B06$ can produce an element of the non-central single-context layers $B02$--$B05$.
\end{proposition}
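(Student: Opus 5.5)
The plan is to reduce everything to the context factor, using the component-by-component calculus of Definition~\ref{def:operations}. An element $(c,\bvec b)$ lies in $B01 \cup B06$ exactly when its context satisfies $c \in \{0,1\} \subseteq \MO_2$. Because meet, join, and orthocomplement act on the context coordinate through $\wedge_{\MO_2}$, $\vee_{\MO_2}$, and ${}^\perp_{\MO_2}$ alone, and the bit-vector coordinate is unrestricted, it suffices to show that $\{0,1\}$ is a subalgebra of the ortholattice $\MO_2$.

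That check is immediate from the bounded-lattice and ortholattice axioms, with no case analysis over atoms:
\begin{itemize}
\item for meets, $0 \wedge c = 0$, $1 \wedge 1 = 1$;
\item for joins, $1 \vee c = 1$, $0 \vee 0 = 0$;
\item for complements, $0^\perp = 1$ and $1^\perp = 0$.
\end{itemize}
So for $P_1=(c_1,\bvec b_1)$ and $P_2=(c_2,\bvec b_2)$ with $c_1,c_2\in\{0,1\}$, the contexts of $P_1\wedge P_2$, $P_1\vee P_2$ and $P_1^\perp$ again lie in $\{0,1\}$. This gives closure of $B01\cup B06$ under all three operations. It contains $0_{F_2}=(0,0000)$ and $1_{F_2}=(1,1111)$, so it is a sub-ortholattice. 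In fact it is isomorphic to $2\times 2^4$, and hence Boolean.

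For the consequence, I will argue by induction on the length of a term (an ortholattice polynomial) evaluated at elements of $B01\cup B06$. Atomic terms lie in the set by hypothesis, and each application of $\wedge$, $\vee$ or ${}^\perp$ stays in it by the closure just shown. So every resulting element has context $0$ or $1$, and never one of $a,b,a^\perp,b^\perp$. By Table~\ref{tab:six-layer} these four contexts are exactly the contexts of $B02$--$B05$.

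I do not expect a real obstacle. The only point needing care is to keep the context value $c=1\in\MO_2$ distinct from the global top $(1,1111)$, as the paper stresses. The argument must be phrased purely on the context coordinate, so that the bit vectors may be arbitrary.

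If one wants to connect this to the word ``central'' in the statement, one can invoke Proposition~\ref{prop:commute}, stated later, which identifies $B01\cup B06$ with the center. The closure then also follows from the general fact that the center of an orthomodular lattice is a Boolean subalgebra. I will nevertheless give the direct computation above, so the proof does not depend on the later proposition.
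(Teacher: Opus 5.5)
Your proof is correct and follows the paper's argument: both show that $\{0,1\}$ is a sub-ortholattice of $\MO_2$ and lift this to $B01 \cup B06 = \{0,1\} \times 2^4$ through the componentwise operations of Definition~\ref{def:operations}. The differences are minor. You read membership in $B01\cup B06$ directly from the layer definition rather than citing the later Proposition~\ref{prop:commute}, and you write out the term induction for the ``consequently'' clause, which the paper leaves implicit.
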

\begin{proof}
By Proposition~\ref{prop:commute} the elements of $B01 \cup B06$ are exactly those whose context lies in $\{0,1\} \subseteq \MO_2$.
The subset $\{0,1\}$ is a subalgebra of $\MO_2$, since it is closed under ${}^\perp$, because $0^\perp = 1$ and $1^\perp = 0$, and under $\wedge,\vee$, because $0$ and $1$ are the bottom and top of $\MO_2$.
By the component-by-component operations of Definition~\ref{def:operations}, the context of any meet, join, or orthocomplement of elements with contexts in $\{0,1\}$ again lies in $\{0,1\}$.
Hence $B01 \cup B06 = \{0,1\} \times 2^4$ is closed under all lattice operations, and since the contexts $a,b,a^\perp,b^\perp$ of $B02$--$B05$ are not in $\{0,1\}$, no element of those layers is reachable.
\end{proof}
The point is again that the context, not the bit vector, controls reachability, since the non-central layers $B02$--$B05$ cannot be entered from the center, even though every bit vector already occurs there.
We will see in Section~\ref{sec:dual-symmetry} that $B06$ is forced into existence as the orthocomplement of $B01$, so the central superposition layer is not an optional addition to the lattice.

\subsection{Commutativity of the layers}
\label{sec:b02-05}
The four single-context layers $B02$--$B05$, with contexts $a,b,b^\perp,a^\perp$, are the only layers whose elements are non-central.
Representative elements appear in Table~\ref{tab:b02-05}.
The following proposition gives their commutativity in full, and it replaces the imprecise description, which one finds in informal accounts, that an element of one of these layers ``fails to commute with exactly one of $x,y,x^\perp,y^\perp$.''
That description is ill-posed, because commutation is invariant under orthocomplementation, so an element either commutes with both of $x$ and $x^\perp$ or with neither.
\begin{table}[htbp]
\caption{\label{tab:b02-05}Representative elements of the single-context layers $B02$--$B05$.}
\centering
\begin{tabular}{c c l c l}
\hline\hline
Layer & Beran No. & Formula & Vector & Description \\
\hline
$B02$ & 19 & $x \wedge (x^\perp \vee y^\perp)$ & $(a,0100)$ & no named operation \\
$B02$ & 22 & $x$ & $(a,1100)$ & generator $x$ \\
$B03$ & 46 & $y \vee (y^\perp \wedge x^\perp)$ & $(b,1011)$ & contrapositive Sasaki conditional \\
$B05$ & 75 & $x^\perp$ & $(a^\perp,0011)$ & generator complement $x^\perp$ \\
$B05$ & 78 & $x^\perp \vee (x \wedge y)$ & $(a^\perp,1011)$ & Sasaki conditional \\
\hline\hline
\end{tabular}
\end{table}
\begin{proposition}[Commutativity of the layers]
\label{prop:commute}
Let $P = (c,\bvec b) \in F_2$. Then $P$ is central, that is, $P$ commutes with every element of $F_2$, if and only if $c \in \{0,1\}$, equivalently $P \in B01 \cup B06$. Among the single-context layers, every element of $B02$ and of $B05$ commutes with $x$ and with $x^\perp$ but with neither $y$ nor $y^\perp$, and every element of $B03$ and of $B04$ commutes with $y$ and with $y^\perp$ but with neither $x$ nor $x^\perp$.
\end{proposition}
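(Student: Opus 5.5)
The plan is to reduce everything to the small factor $\MO_2$, using Definition~\ref{def:operations}. First I establish a transfer lemma. For $P_1=(c_1,\bvec b_1)$ and $P_2=(c_2,\bvec b_2)$, $P_1 \commutes P_2$ in $F_2$ if and only if $c_1 \commutes c_2$ in $\MO_2$. The argument unfolds the definition $P_1=(P_1\wedge P_2)\vee(P_1\wedge P_2^\perp)$ componentwise. The bit-vector component reads $\bvec b_1=(\bvec b_1\wedge\bvec b_2)\vee(\bvec b_1\wedge\neg\bvec b_2)$, which always holds in the Boolean algebra $2^4$. So the equality in $F_2$ holds exactly when the context component $c_1=(c_1\wedge c_2)\vee(c_1\wedge c_2^\perp)$ holds in $\MO_2$.

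Second, I tabulate commutativity in $\MO_2=\{0,a,a^\perp,b,b^\perp,1\}$, where $a,a^\perp,b,b^\perp$ are pairwise incomparable atoms (equivalently coatoms). The elements $0$ and $1$ commute with everything, and each $c$ commutes with $c^\perp$ and with itself. For distinct incomparable atoms $c\neq d^{\perp}$, say $c=a$ and $d=b$, we get $a\wedge b=0$ and $a\wedge b^\perp=0$, so $(a\wedge b)\vee(a\wedge b^\perp)=0\neq a$; hence $a$ does not commute with $b$ or with $b^\perp$. Thus in $\MO_2$ the commutant of $a$ (and of $a^\perp$) is $\{0,1,a,a^\perp\}$, the commutant of $b$ (and of $b^\perp$) is $\{0,1,b,b^\perp\}$, and the center is $\{0,1\}$.

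Third, I combine the two steps. $P=(c,\bvec b)$ commutes with every element of $F_2$ iff $c$ commutes with every context, since every $c'\in\MO_2$ occurs as the context of some element. That holds iff $c\in\{0,1\}$, i.e.\ $P\in B01\cup B06$. For the single-context statements I read off the contexts of the generators from Table~\ref{tab:six-layer} and Table~\ref{tab:b02-05}: $x=(a,1100)$, $x^\perp=(a^\perp,0011)$, and $y=(b,1010)$, $y^\perp=(b^\perp,0101)$. The context of $y$ can be confirmed from the layer placement, with $y^\perp$ in $B04$ and context $b^\perp$. Applying the transfer lemma and the commutant table then gives the claim. Layers $B02$ and $B05$ have contexts $a,a^\perp$, which commute with $a,a^\perp$ but not with $b,b^\perp$. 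Layers $B03$ and $B04$ have contexts $b,b^\perp$, which commute with $b,b^\perp$ but not with $a,a^\perp$.

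The main obstacle is not computational. It is making sure the transfer lemma is applied to an honest isomorphism of orthocomplemented lattices, so that commutativity, which is defined purely from $\wedge,\vee,{}^\perp$, is preserved. This is exactly what Definition~\ref{def:operations} asserts about \eqref{eq:decomp-f2}. A secondary point needing care is pinning down the contexts of $x$ and $y$. I would fix these via $a=x\wedge\com(x,y)^\perp$: $x$ has context $a$ because its $\MO_2$-component is its meet with the unit $\com(x,y)^\perp$ of the residual factor.
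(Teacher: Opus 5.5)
Your proposal is correct and follows essentially the same route as the paper: you reduce commutativity in $F_2$ componentwise to $\MO_2$, show that an atom commutes exactly with $\{0,\alpha,\alpha^\perp,1\}$, and read off the contexts of $x,x^\perp,y,y^\perp$. The one minor difference is that you obtain centrality of the context-$0$ and context-$1$ elements directly from the transfer lemma, since $0$ and $1$ commute with every context. The paper instead argues that such an element commutes with the generators and hence with the sublogic they generate. Your version is slightly more self-contained.
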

\begin{proof}
We first reduce commutativity in $F_2$ to commutativity in $\MO_2$.
In a direct product of orthomodular lattices, $(c_1,\bvec b_1) \commutes (c_2,\bvec b_2)$ holds if and only if $c_1 \commutes c_2$ in $\MO_2$ and $\bvec b_1 \commutes \bvec b_2$ in $2^4$, because the defining identity $P = (P\wedge Q)\vee(P\wedge Q^\perp)$ holds in a product exactly when it holds in each factor.
The factor $2^4$ is Boolean, so $\bvec b_1 \commutes \bvec b_2$ always holds, and commutativity in $F_2$ is governed entirely by the contexts.
We next determine commutativity in $\MO_2$.
Let $\alpha$ be an atom of $\MO_2$ and let $c \in \MO_2$.
If $c \in \{0,\alpha,\alpha^\perp,1\}$ then $c$ and $\alpha$ lie in a common Boolean subalgebra and so commute.
If $c$ is an atom different from $\alpha$ and from $\alpha^\perp$, then $c \wedge \alpha = c \wedge \alpha^\perp = 0$, because distinct atoms of $\MO_2$ have meet $0$, whence $(c \wedge \alpha)\vee(c \wedge \alpha^\perp) = 0 \neq c$ and $c$ does not commute with $\alpha$.
Therefore $c$ commutes with $\alpha$ if and only if $c \in \{0,\alpha,\alpha^\perp,1\}$.
We now apply this.
The generators have contexts $a$ and $b$, that is, $x$ has $\MO_2$-component $a$ and $y$ has $\MO_2$-component $b$.
By the previous paragraph, and using that commutation is invariant under orthocomplementation, an element of context $c$ commutes with $x$ and $x^\perp$ exactly when $c \in \{0,a,a^\perp,1\}$, and with $y$ and $y^\perp$ exactly when $c \in \{0,b,b^\perp,1\}$.
If $c \in \{0,1\}$ then both hold, and $P$ commutes with the generators, hence with the sublogic they generate, which is all of $F_2$, so $P$ is central.
If $c \in \{a,a^\perp\}$ then $P$ commutes with $x,x^\perp$ but not with $y,y^\perp$, and if $c \in \{b,b^\perp\}$ the reverse holds.
Since $B02$ and $B05$ carry contexts $a,a^\perp$ and $B03$ and $B04$ carry contexts $b,b^\perp$, the stated profiles follow.
\end{proof}
\begin{corollary}[Context-dependent conditionals]
\label{cor:context-dependent}
For $P \in B02$ and $Q \in B03$ the three quantum conditionals are in general distinct, $P \impS Q \neq P \impC Q \neq P \impR Q$.
\end{corollary}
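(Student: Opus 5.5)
The plan is to prove something stronger than the statement. For every $P \in B02$ and $Q \in B03$, the three conditionals land in three different layers, so they are pairwise distinct, and in particular $P \impS Q \neq P \impC Q \neq P \impR Q$. Everything is computed in the component-by-component calculus of Definition~\ref{def:operations}, using Ozawa's identities that express $\impS$ and $\impC$ as $\impR$ augmented by a term on $\com(P,Q)^\perp$.

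First, write $P = (a,\bvec p)$ and $Q = (b,\bvec q)$ and compute $\com(P,Q)$ from the closed form \eqref{eq:com-two}.
\begin{itemize}
\item On the context factor, each of the four meets $a \wedge b$, $a \wedge b^\perp$, $a^\perp \wedge b$, $a^\perp \wedge b^\perp$ is a meet of two distinct atoms of $\MO_2$. Each is therefore $0$, and the context of $\com(P,Q)$ is $0$.
\item On the Boolean factor, the join of the four meets is $1111$, exactly as in the proof of Corollary~\ref{cor:b01-agreement}.
\end{itemize}
Hence $\com(P,Q) = (0,1111)$ and $\com(P,Q)^\perp = (1,0000)$, the bottom of $B06$. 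This is consistent with Proposition~\ref{prop:commute}, since $a$ and $b$ do not commute in $\MO_2$.

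Second, compute $P \impR Q = (P\wedge Q)\vee(P^\perp\wedge Q)\vee(P^\perp\wedge Q^\perp)$ componentwise.
\begin{itemize}
\item The contexts of the three meets are $a\wedge b$, $a^\perp\wedge b$ and $a^\perp \wedge b^\perp$, each equal to $0$.
\item The bit vector is $\neg\bvec p \vee \bvec q$.
\end{itemize}
So $P \impR Q = (0, \neg\bvec p\vee\bvec q) \in B01$.

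Third, apply the identities
$P \impS Q = (P\impR Q)\vee(P^\perp\wedge\com(P,Q)^\perp)$ and $P\impC Q = (P\impR Q)\vee(Q\wedge\com(P,Q)^\perp)$.
\begin{itemize}
\item $P^\perp \wedge \com(P,Q)^\perp = (a^\perp \wedge 1,\ \neg\bvec p \wedge 0000) = (a^\perp,0000)$. Hence $P\impS Q = (a^\perp, \neg\bvec p\vee\bvec q) \in B05$.
\item $Q \wedge \com(P,Q)^\perp = (b,0000)$. Hence $P \impC Q = (b,\neg\bvec p\vee\bvec q) \in B03$.
\end{itemize}
As a cross-check, one may compute $P^\perp\vee(P\wedge Q)$ and $(P\vee Q)^\perp\vee Q$ directly. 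This gives contexts $a^\perp\vee 0 = a^\perp$ and $(a \vee b)^\perp \vee b = 0 \vee b = b$, which agree.

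The three results share the bit vector $\neg\bvec p\vee\bvec q$ but carry the contexts $0$, $a^\perp$ and $b$. These are distinct elements of $\MO_2$, so by uniqueness of the representation $(c,\bvec b)$ the three conditionals are pairwise distinct.

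As a concrete instance, take the generators $P = x = (a,1100)$ and $Q = y = (b,1010)$. The construction then recovers Beran Nos.~78, 46 and 14, with vectors $(a^\perp,1011)$, $(b,1011)$ and $(0,1011)$, matching Tables~\ref{tab:b01} and~\ref{tab:b02-05}. This confirms the computation.

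The only step needing care is the context of $\com(P,Q)$. Everything hinges on the four cross meets of atoms vanishing in $\MO_2$, and on not confusing $\com(P,Q)^\perp = (1,0000)$ with the global top $1_{F_2}$. The wording ``in general'' is then covered, and in fact strengthened from ``for some pair'' to ``for every pair''.
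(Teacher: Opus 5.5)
Your proof is correct, and it takes a more explicit route than the paper's.

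The paper argues structurally. By Proposition~\ref{prop:commute} the pair does not commute, so $\com(P,Q)^\perp \neq 0$. Hence the augmentation terms of Theorem~\ref{thm:kotas} are non-zero, and the paper asserts that $P^\perp \wedge \com(P,Q)^\perp$ and $Q \wedge \com(P,Q)^\perp$ are ``in general different.'' That argument never checks that a non-zero augmentation term fails to lie below $P \impR Q$. A non-zero $\alpha_j$ with $\alpha_j \leq \mathrm{b}_n(P,Q)$ would leave the conditional unchanged, which is presumably why the paper hedges with ``in general.''

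You instead compute everything in the context calculus. You get $\com(P,Q)^\perp = (1,0000)$ uniformly, augmentation terms $(a^\perp,0000)$ and $(b,0000)$, and the three conditionals $(0,\neg\bvec p\vee\bvec q)$, $(a^\perp,\neg\bvec p\vee\bvec q)$ and $(b,\neg\bvec p\vee\bvec q)$. This closes that gap and upgrades the corollary to a statement about every pair $P \in B02$, $Q \in B03$: the three conditionals are always pairwise distinct. The separation is carried entirely by the context, while the bit vector is always the classical $\neg\bvec p \vee \bvec q$. It also makes precise the paper's follow-up remark: the Sasaki conditional lands in the complementary context of the antecedent, and the contrapositive Sasaki conditional in the context of the consequent.

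What each approach buys: the paper's version points to the mechanism, non-commutativity, without using the product presentation, but it does not by itself establish distinctness. Yours gives an actual proof and a sharper conclusion, at the cost of relying on the specific decomposition $F_2 \cong \MO_2 \times 2^4$ and on the fact that distinct atoms of $\MO_2$ meet to $0$.
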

\begin{proof}
By Proposition~\ref{prop:commute} an element of $B02$ does not commute with $y$, and an element of $B03$ has context $b$, so $\com(P,Q) < 1$ and $\com(P,Q)^\perp \neq 0$.
The augmentation terms $\alpha_j$ of Theorem~\ref{thm:kotas} are then non-zero, and the values $P^\perp \wedge \com(P,Q)^\perp$ and $Q \wedge \com(P,Q)^\perp$ that distinguish the Sasaki and contrapositive Sasaki conditionals are in general different, which separates the three conditionals.
\end{proof}
The asymmetry is already visible on the generators.
The Sasaki conditional $x \impS y$ is No.~78 and lies in $B05$, with context $a^\perp$, while the contrapositive Sasaki conditional $x \impC y$ is No.~46 and lies in $B03$, with context $b$.
The Sasaki conditional therefore takes the context of the antecedent and the contrapositive Sasaki conditional takes the context of the consequent, so the order of antecedent and consequent already affects the result, before any temporal dynamics is introduced.
Each single-context layer represents reasoning under a fixed context, in which the bit vector still computes classically but the element fails to commute with one of the generators.
Once propositions from two different single contexts are combined, a meet or a join can move the result out of this regime, often down to $B01$, which is one concrete way to see why implication-like operations are not globally uniform in an orthomodular lattice.

\subsection{Context dynamics under meet and join}
\label{sec:context-dynamics}
The contexts carry the lattice order of $\MO_2$, in which $0$ is the bottom, $1$ is the top, and the four atoms are pairwise incomparable.
A meet or join of two elements moves the result to the layer of the context meet or context join.
\begin{theorem}[Context dynamics]
\label{thm:context-dynamics}
Let $P_1 = (c_1,\bvec b_1)$ and $P_2 = (c_2,\bvec b_2)$. Then the context of $P_1 \wedge P_2$ is $c_1 \wedge_{\MO_2} c_2$ and the context of $P_1 \vee P_2$ is $c_1 \vee_{\MO_2} c_2$. Consequently $P_1 \wedge P_2 \in B01$ if and only if $c_1 \wedge_{\MO_2} c_2 = 0$, and $P_1 \vee P_2 \in B06$ if and only if $c_1 \vee_{\MO_2} c_2 = 1$. In particular, if $c_1$ and $c_2$ are distinct atoms, then $P_1 \wedge P_2 \in B01$ and $P_1 \vee P_2 \in B06$.
\end{theorem}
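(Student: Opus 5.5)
The plan is to read the first assertion straight off Definition~\ref{def:operations}. That definition says meet and join in $F_2$ are computed componentwise through the isomorphism \eqref{eq:decomp-f2}, so the $\MO_2$-component of $P_1 \wedge P_2$ is $c_1 \wedge_{\MO_2} c_2$ and that of $P_1 \vee P_2$ is $c_1 \vee_{\MO_2} c_2$. Since the context of an element is by definition its $\MO_2$-component, this settles the first sentence. The only care needed is to note that the representation $P=(c,\bvec b)$ is unique, so the context is a well-defined function of the element.

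For the two equivalences, I will use how the layers are defined in Table~\ref{tab:six-layer}. Layer $B01$ is exactly the set of elements with context $0$, and $B06$ is exactly the set with context $1$. So $P_1 \wedge P_2 \in B01$ holds iff its context $c_1 \wedge_{\MO_2} c_2$ equals $0$, and dually for the join and $B06$. No commutativity information from Proposition~\ref{prop:commute} is needed here; the statement is purely about contexts.

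For the final clause, I will use the order structure of $\MO_2$ recalled just before the theorem. Its elements are $0$, $1$, and four pairwise incomparable atoms $a,b,a^\perp,b^\perp$, each of height one. Take distinct atoms $c_1 \neq c_2$. Any common lower bound lies below an atom, so it is $0$ or that atom. It cannot be $c_1$, since $c_1 \le c_2$ would contradict incomparability, and symmetrically it cannot be $c_2$. Hence $c_1 \wedge c_2 = 0$. Dually, the only elements above an atom are the atom itself and $1$, so the join is $1$. This also covers the complementary pairs such as $a, a^\perp$, where the result is consistent with $a \wedge a^\perp = 0$ and $a \vee a^\perp = 1$.

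I expect no real obstacle. The only point deserving attention is making the order-theoretic description of $\MO_2$ (bottom, top, and an antichain of four atoms) explicit, since the last clause depends entirely on it. Everything else is an immediate consequence of the componentwise operations of Definition~\ref{def:operations}.
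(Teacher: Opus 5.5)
Your proposal is correct and matches the paper's proof step for step. The context claim is read off the first component of Definition~\ref{def:operations}, the two equivalences hold because the layers are indexed by context, and the last clause uses that distinct atoms of $\MO_2$ have meet $0$ and join $1$; your height-one antichain argument only makes explicit a fact the paper asserts without proof.
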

\begin{proof}
The statement about the contexts of $P_1 \wedge P_2$ and $P_1 \vee P_2$ is the first component of Definition~\ref{def:operations}.
Since the layers are indexed by the context, $P_1 \wedge P_2$ lies in $B01$ exactly when its context is $0$ and in $B06$ exactly when its context is $1$, which gives the two equivalences.
If $c_1$ and $c_2$ are distinct atoms of $\MO_2$, then $c_1 \wedge_{\MO_2} c_2 = 0$ and $c_1 \vee_{\MO_2} c_2 = 1$, because distinct atoms of $\MO_2$ have meet $0$ and join $1$, and the last claim follows.
\end{proof}
The three transitions $(a,1100) \wedge (b,1010) = (0,1000)$, $(0,0110) \vee (a,1000) = (a,1110)$, and $(a,1000) \vee (b,0100) = (1,1100)$ realize $B02 \wedge B03 \to B01$, $B01 \vee B02 \to B02$, and $B02 \vee B03 \to B06$.
The third is a forced ascent to $B06$.
Combining two elements from distinct atomic contexts produces a central, context-superpositional element, even when the two bit vectors remain Boolean.
Meet therefore acts as a contextual intersection, computing the common ground of two perspectives and possibly collapsing to the neutral context, while join acts as a contextual union, combining perspectives and possibly forcing the ascent to the central layer.
The combination is not reversible at the level of context, since a join can recover a higher context but cannot record which contexts produced it.
Both $a \vee b$ and $a^\perp \vee b^\perp$ equal $\com(x,y)^\perp$, so the data of the two contributing contexts are lost.

In cognitive terms, a meet asks for what two contexts have in common, and a join combines two frames into a more encompassing context.
This is why the order and pairing of questions can matter, since the same Boolean content may be carried by different context histories, and those histories are exactly what the projection of Section~\ref{sec:projection} discards.
The conclusion is that the operations of $F_2$ update the context together with the Boolean content, so that meet and join are operators of contextual restriction and contextual combination.
The information loss that the projection of Section~\ref{sec:projection} makes precise is already present here, at the level of these layer transitions.

\section{Layer Dualities and Their Rigidity}
\label{sec:dual-symmetry}
Orthocomplementation is an involution that reverses the order, since $(P^\perp)^\perp = P$ and $P \leq Q$ implies $Q^\perp \leq P^\perp$, and it sends each element to a complement, since $P \vee P^\perp = 1$ and $P \wedge P^\perp = 0$.
In $F_2$ it acts on the six layers, pairing $B01$ with $B06$ and exchanging $B02$ with $B05$ and $B03$ with $B04$.
We establish the two dualities and then prove that they are exactly the orthocomplementation of the $\MO_2$ factor, which is what makes the layer picture rigid.

\subsection{Vertical and horizontal dualities}
\label{sec:vertical-horizontal}
\begin{theorem}[Vertical duality]
\label{thm:vertical}
Orthocomplementation restricts to an order-reversing bijection between $B01$ and $B06$.
\end{theorem}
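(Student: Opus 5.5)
The plan is to work entirely in the component-by-component calculus of Definition~\ref{def:operations}, where $F_2$ is identified with $\MO_2 \times 2^4$ and $P^\perp = (c^\perp_{\MO_2}, \bvec b^\perp_{2^4})$. Under this identification $B01 = \{0\} \times 2^4$ and $B06 = \{1\} \times 2^4$. The proof then splits into three short steps: orthocomplementation maps $B01$ into $B06$ and back; the restriction is a bijection; and it reverses the order.

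\emph{Mapping.} For $P = (0,\bvec b) \in B01$ we have $P^\perp = (0^\perp, \neg\bvec b) = (1, \neg\bvec b)$, because $0^\perp = 1$ in $\MO_2$. Hence $P^\perp \in B06$. Symmetrically, $1^\perp = 0$ sends $B06$ into $B01$.

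\emph{Bijectivity.} Since $(P^\perp)^\perp = P$ holds for every $P \in F_2$, the two restricted maps $B01 \to B06$ and $B06 \to B01$ are mutually inverse. Both are therefore bijections. Concretely, $(0,\bvec b) \mapsto (1,\neg\bvec b)$, and bitwise negation is a permutation of $2^4$, so this also confirms that the map is onto $B06$.

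\emph{Order reversal.} This is inherited directly from the global axiom $P \le Q \Rightarrow Q^\perp \le P^\perp$, with the order on each layer taken as the restriction of the order of $F_2$. For the converse direction, apply the same axiom to $P^\perp$ and $Q^\perp$ and use the involution property. This shows that the map is an order anti-isomorphism between $B01$ and $B06$, each equipped with its induced order. One can also check this in components: on $B01$ we have $(0,\bvec b) \le (0,\bvec b')$ if and only if $\bvec b \le \bvec b'$, and this holds if and only if $\neg\bvec b' \le \neg\bvec b$, which is the same as $(1,\neg\bvec b') \le (1,\neg\bvec b)$.

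No step is a genuine obstacle. The only point needing care is to state clearly that ``order-reversing bijection'' refers to the order induced from $F_2$ on each layer, so that order reversal in both directions, and not merely one implication, is established. I would also remark that the global top $(1,1111)$ of $B06$ is the image of the global bottom $(0,0000)$, and that $\com(x,y)^\perp = (1,0000)$ is the image of the top $(0,1111)$ of $B01$.
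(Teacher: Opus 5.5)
Your proposal is correct and follows the paper's proof essentially step for step. You compute $(0,\bvec b)^\perp = (1,\neg\bvec b)$ and $(1,\bvec b)^\perp = (0,\neg\bvec b)$ from Definition~\ref{def:operations}, obtain bijectivity from the involution, and check order reversal through the chain $\bvec b \le \bvec b' \iff \neg\bvec b' \le \neg\bvec b$. Your extra derivation of the two-sided order reversal directly from the ortholattice axioms is a valid supplement that the paper does not spell out, and it changes nothing essential.
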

\begin{proof}
Let $P = (0,\bvec b) \in B01$.
By Definition~\ref{def:operations}, $P^\perp = (0^\perp_{\MO_2}, \neg\bvec b) = (1,\neg\bvec b)$, which lies in $B06$ because $0^\perp = 1$ in $\MO_2$.
Conversely $(1,\bvec b)^\perp = (0,\neg\bvec b) \in B01$, so the restriction of ${}^\perp$ to $B01$ is a bijection onto $B06$ with inverse the restriction to $B06$.
It reverses the order, because for $P_1 = (0,\bvec b)$ and $P_2 = (0,\bvec b')$ one has $P_1 \leq P_2 \iff \bvec b \leq \bvec b' \iff \neg\bvec b' \leq \neg\bvec b \iff P_2^\perp \leq P_1^\perp$, where the middle equivalence is the order reversal of the Boolean complement on $2^4$.
\end{proof}
Theorem~\ref{thm:vertical} shows that the central superposition layer is not an optional addition.
Once $B01$ is present, its image under orthocomplementation is a sixteen-element layer of central elements at context $1$, namely $B06$, so the orthomodular structure forces $B06$ into existence.
The name superposition layer is justified at the level of the operations.
For $P \in B06$ and each single-context layer $B0i$ with $i \in \{2,3,4,5\}$ there is an element $Q_i \in B0i$ with $P \wedge Q_i \in B0i$, because the context meet $1 \wedge_{\MO_2} c_i = c_i$ keeps the result in $B0i$.
Every element of $B06$ other than its minimum $\com(x,y)^\perp=(1,0000)$ therefore has a non-trivial image in each of the four complementary single contexts, which is the sense in which its content is distributed over the contexts rather than localized in any one of them.
\begin{theorem}[Horizontal duality]
\label{thm:horizontal}
Orthocomplementation restricts to order-reversing bijections $B02 \leftrightarrow B05$ and $B03 \leftrightarrow B04$.
\end{theorem}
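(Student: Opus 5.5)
The argument mirrors the proof of Theorem~\ref{thm:vertical}, now applied to the atomic contexts of $\MO_2$. The only input beyond Definition~\ref{def:operations} is how ${}^\perp_{\MO_2}$ acts on the atoms. By definition, $a^\perp$ is the orthocomplement of $a$ in $\MO_2$ and $b^\perp$ that of $b$. Since ${}^\perp_{\MO_2}$ is an involution, it also sends $a^\perp \mapsto a$ and $b^\perp \mapsto b$. Comparing with Table~\ref{tab:six-layer}, $B02$ has context $a$, $B05$ has $a^\perp$, $B03$ has $b$, and $B04$ has $b^\perp$.

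First, take $P = (a,\bvec b) \in B02$. By Definition~\ref{def:operations}, $P^\perp = (a^\perp, \neg\bvec b)$, which lies in $B05$. Conversely, $(a^\perp,\bvec b)^\perp = (a,\neg\bvec b) \in B02$. Since $(P^\perp)^\perp = P$, the two restrictions of ${}^\perp$ are mutually inverse, so we get a bijection $B02 \leftrightarrow B05$. The same computation with $b, b^\perp$ gives $B03 \leftrightarrow B04$.

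Second, order reversal. Within a single layer the context coordinate is constant. Hence, by the product order, $(c,\bvec b) \le (c,\bvec b')$ holds exactly when $\bvec b \le \bvec b'$. Take $P_1,P_2 \in B02$ with bit vectors $\bvec b,\bvec b'$. Then
\[
P_1 \le P_2 \iff \bvec b \le \bvec b' \iff \neg\bvec b' \le \neg\bvec b \iff P_2^\perp \le P_1^\perp,
\]
where the last step is the product order inside $B05$, again with constant context $a^\perp$. The same holds for $B03$, $B04$, and $B05$. Alternatively, order reversal follows directly from the global antitonicity of ${}^\perp$ in $F_2$, combined with bijectivity: an injective antitone map whose inverse is also antitone is an order-reversing bijection.

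I expect no real obstacle. The one point needing care is to confirm that the atoms pair as $a \leftrightarrow a^\perp$ and $b \leftrightarrow b^\perp$ under ${}^\perp_{\MO_2}$, and not, say, $a \leftrightarrow b^\perp$. This is fixed by the notation together with the identification $a^\perp = x^\perp \wedge \com(x,y)^\perp$ in Table~\ref{tab:six-layer}. Computing the $\MO_2$-component of $x^\perp = (a,1100)^\perp$ gives the context of $x^\perp$, which Table~\ref{tab:b02-05} lists as $a^\perp$ (No.~75). This confirms that ${}^\perp_{\MO_2}$ sends $a$ to $a^\perp$, and similarly $y^\perp$ (No.~58, context $b^\perp$) confirms $b \mapsto b^\perp$.
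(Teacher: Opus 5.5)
Your proposal is correct and takes essentially the paper's route: you compute $P^\perp = (a^\perp,\neg\bvec b)$ from Definition~\ref{def:operations}, use $(a^\perp)^\perp = a$ to get the inverse, treat $B03 \leftrightarrow B04$ the same way, and carry order reversal over from the proof of Theorem~\ref{thm:vertical}. Your extra check that ${}^\perp_{\MO_2}$ pairs $a$ with $a^\perp$ rather than with $b^\perp$ is something the paper takes up separately in Remark~\ref{rem:rigidity}; it is harmless, but the proof itself does not need it.
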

\begin{proof}
Let $P = (a,\bvec b) \in B02$, where $a = x \wedge \com(x,y)^\perp$.
The orthocomplement of $a$ in $\MO_2$ is $a^\perp = x^\perp \wedge \com(x,y)^\perp$, so by Definition~\ref{def:operations} we have $P^\perp = (a^\perp, \neg\bvec b) \in B05$, and $(a^\perp)^\perp = a$ gives the inverse, exactly as in the proof of Theorem~\ref{thm:vertical}.
For instance No.~22, the generator $x = (a,1100)$, maps to No.~75, the complement $x^\perp = (a^\perp,0011)$, and No.~19, the element $(a,0100)$, maps to No.~78, the Sasaki conditional $(a^\perp,1011)$.
The argument for $B03 \leftrightarrow B04$ is identical with $b$ in place of $a$.
Order reversal follows as before.
\end{proof}
Horizontal duality identifies the act of fixing a context with the act of fixing its complement, at the level of $\MO_2$, through $a \leftrightarrow a^\perp$ and $b \leftrightarrow b^\perp$.
Orthocomplementation does more than negate the Boolean content, since it also moves a proposition between the two complementary single-context perspectives.

\subsection{Functoriality and rigidity}
\label{sec:functorial}
\begin{definition}[The order category and its duality]
\label{def:functor}
Regard $F_2$ as a category $\mathcal{C}_{F_2}$ whose objects are the elements of $F_2$ and in which there is a unique morphism $P \to Q$ exactly when $P \leq Q$. A contravariant endofunctor of $\mathcal{C}_{F_2}$ is then a map on objects that turns each morphism $P \to Q$ into a morphism $Q^\perp \to P^\perp$.
\end{definition}
\begin{theorem}[Functoriality of orthocomplementation]
\label{thm:functoriality}
The map ${}^\perp$ is a contravariant endofunctor of $\mathcal{C}_{F_2}$ that is involutive, and as a map of lattices it is an involutive anti-automorphism of $F_2$.
\end{theorem}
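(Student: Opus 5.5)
The plan is to verify the three claims of Theorem~\ref{thm:functoriality} separately, working directly from the orthocomplementation axioms of Section~\ref{sec:coml}. Nothing specific to $F_2$ is needed beyond the fact that it is a logic. The component-by-component form of Definition~\ref{def:operations} serves only as a cross-check.

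First I will show that ${}^\perp$ is a contravariant endofunctor of $\mathcal{C}_{F_2}$ in the sense of Definition~\ref{def:functor}. On objects it sends $P$ to $P^\perp$. If there is a morphism $P \to Q$, then $P \leq Q$, and the antitone axiom gives $Q^\perp \leq P^\perp$, so there is a (unique) morphism $Q^\perp \to P^\perp$. Because $\mathcal{C}_{F_2}$ is thin, morphisms are unique. Preservation of identities and the reversal of composition therefore hold automatically, as any two parallel morphisms are equal. Involutivity at the level of objects is the axiom $(P^\perp)^\perp = P$, and by thinness it holds for morphisms as well. The functor is thus its own inverse, which makes it a contravariant isomorphism of $\mathcal{C}_{F_2}$ with itself.

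Next I will establish the lattice statement. An involutive map is a bijection, and an antitone bijection whose inverse (itself) is also antitone is an order anti-isomorphism: $P \leq Q \iff Q^\perp \leq P^\perp$. An order anti-isomorphism carries suprema to infima. Concretely, I would prove $(P \vee Q)^\perp = P^\perp \wedge Q^\perp$ by checking that $R \leq (P\vee Q)^\perp$ holds exactly when $R \leq P^\perp$ and $R \leq Q^\perp$. This follows by applying the anti-isomorphism to $P\vee Q \leq R^\perp$ and unpacking the join. The dual De Morgan law then follows by substituting $P^\perp, Q^\perp$ and using involutivity. Together with $0^\perp = 1$, this shows that ${}^\perp$ is an involutive anti-automorphism of $F_2$. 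As a consistency check, Definition~\ref{def:operations} exhibits this map as $(c,\bvec b) \mapsto (c^\perp, \neg\bvec b)$, which is visibly an anti-automorphism in each factor, since $\MO_2$ and $2^4$ are both ortholattices.

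The only step with any content is deriving the De Morgan laws from the antitone and involutive axioms. The rest is bookkeeping forced by thinness of the order category, so I expect no real obstacle. The one point to state carefully is that contravariance reverses composition, which is vacuous here because morphisms are unique.
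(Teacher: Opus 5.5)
Your proposal is correct and matches the paper's proof. Contravariance comes from the antitone axiom, bijectivity from $(P^\perp)^\perp = P$, and identities and reversed composition are automatic by thinness; the anti-automorphism property then follows from the De Morgan laws, which you derive from the order anti-isomorphism rather than citing them as valid in every orthomodular lattice, a harmless elaboration.
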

\begin{proof}
On objects ${}^\perp$ is a bijection, since $(P^\perp)^\perp = P$.
It is contravariant on morphisms, because $P \leq Q$ implies $Q^\perp \leq P^\perp$, which is the morphism $Q^\perp \to P^\perp$, and it preserves identities and composition up to this reversal, since $P \leq Q \leq R$ implies $R^\perp \leq Q^\perp \leq P^\perp$.
As a map of lattices it sends meets to joins and joins to meets by the De Morgan laws, which hold in every orthomodular lattice, and it is bijective and order-reversing, hence an involutive anti-automorphism.
\end{proof}
We can now state what is special about the layer duality, in place of the vague claim that it is ``unique to $F_2$ among finite orthomodular lattices.''
The three pairings are the orthocomplementation of the $\MO_2$ factor, and this is rigid in the following sense.
\begin{proposition}[The layer duality is the orthocomplementation of $\MO_2$]
\label{prop:rigidity}
Let $L \cong \MO_2 \times B$ with $B$ a Boolean algebra, equipped with the product orthocomplementation, and stratify $L$ by the six contexts $0,a,b,a^\perp,b^\perp,1$ of the $\MO_2$ factor. Then orthocomplementation permutes the six context strata by the involution of $\MO_2$, that is, by the three transpositions $0 \leftrightarrow 1$, $a \leftrightarrow a^\perp$, and $b \leftrightarrow b^\perp$. The pair $0 \leftrightarrow 1$ exchanges the two central strata and the pairs $a \leftrightarrow a^\perp$ and $b \leftrightarrow b^\perp$ exchange the four non-central strata. For $B = 2^4$, that is for $L = F_2$, it is the vertical duality $B01 \leftrightarrow B06$ together with the horizontal dualities $B02 \leftrightarrow B05$ and $B03 \leftrightarrow B04$.
\end{proposition}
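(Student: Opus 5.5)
The argument is direct from the definition of the product orthocomplementation, in the same way as Theorems~\ref{thm:vertical} and~\ref{thm:horizontal}. Write each element of $L$ as $(c,\beta)$ with $c \in \MO_2$ and $\beta \in B$, and write $S_c = \{c\} \times B$ for the stratum of context $c$. By hypothesis $(c,\beta)^\perp = (c^\perp_{\MO_2}, \beta^\perp_B)$. First I would show $S_c^\perp = S_{c^\perp}$. The inclusion $\subseteq$ is immediate from the displayed formula. Equality follows because $\beta \mapsto \beta^\perp_B$ is a bijection of $B$ (it is an involution), so every element of $S_{c^\perp}$ is hit. Hence ${}^\perp$ induces a permutation of the six strata, namely $S_c \mapsto S_{c^\perp}$, and this permutation is precisely the orthocomplementation of $\MO_2$ transported to the index set.

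Next I would read off the involution of $\MO_2$ on its six elements. We have $0^\perp = 1$ and $1^\perp = 0$ because orthocomplementation exchanges bottom and top. By the labelling of the contexts, the atoms come in the complementary pairs $a \leftrightarrow a^\perp$ and $b \leftrightarrow b^\perp$. Since $c^{\perp\perp} = c$ and no element of $\MO_2$ is its own complement (from $c = c^\perp$ we would get $c = c \wedge c^\perp = 0$ and also $c = c \vee c^\perp = 1$, a contradiction), the permutation is fixed-point-free. It is therefore exactly the product of the three transpositions listed in the statement.

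For the central/non-central clause I would extend the first paragraph of the proof of Proposition~\ref{prop:commute} to a general Boolean $B$. The product reduction shows that commutativity in $L$ is governed by the $\MO_2$ component, because $B$ is Boolean. The atom analysis in $\MO_2$ then shows that $c$ commutes with every element of $\MO_2$ if and only if $c \in \{0,1\}$; each atom fails to commute with the atoms outside its own pair. Consequently the central strata are $S_0$ and $S_1$, which are exchanged by $0 \leftrightarrow 1$, and the four non-central strata are exchanged in pairs.

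Finally, for $B = 2^4$ I would substitute the layer labels of Table~\ref{tab:six-layer} ($B01 = S_0$, $B02 = S_a$, $B03 = S_b$, $B04 = S_{b^\perp}$, $B05 = S_{a^\perp}$, $B06 = S_1$). This recovers Theorems~\ref{thm:vertical} and~\ref{thm:horizontal}. The only step needing care is the surjectivity $S_c^\perp = S_{c^\perp}$ together with the fixed-point-freeness, and both follow from involutivity, so I expect no real obstacle.
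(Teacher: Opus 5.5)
Your proposal is correct and takes essentially the same route as the paper. The product orthocomplementation $(c,\beta)\mapsto(c^\perp,\beta^\perp)$ sends the stratum of context $c$ onto that of $c^\perp$, the involution of $\MO_2$ gives the three transpositions, and the fact that exactly the contexts $0$ and $1$ are central gives the central/non-central pattern. Your extra steps are all sound and slightly tighten the paper's argument: surjectivity onto $S_{c^\perp}$ via involutivity on $B$, fixed-point-freeness, and rederiving centrality for an arbitrary Boolean $B$ instead of citing Proposition~\ref{prop:commute}, which the paper proves only for $F_2$.
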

\begin{proof}
The product orthocomplementation acts as $(c,\beta) \mapsto (c^\perp,\beta^\perp)$, so it sends the stratum of context $c$ to the stratum of context $c^\perp$, and its action on the strata is the orthocomplementation of the factor $\MO_2$.
That orthocomplementation is part of the structure of $\MO_2$, since by definition $\MO_2$ is the modular lattice $M_4$ equipped with the orthocomplementation that interchanges the bounds, $0^\perp = 1$, and pairs the four atoms as $a \leftrightarrow a^\perp$ and $b \leftrightarrow b^\perp$.
It therefore realizes exactly the three transpositions $0 \leftrightarrow 1$, $a \leftrightarrow a^\perp$, $b \leftrightarrow b^\perp$.
By Proposition~\ref{prop:commute} the strata of contexts $0$ and $1$ are central and the four atomic strata are non-central, which gives the stated central and non-central pattern, and the case $B = 2^4$ is the statement about $F_2$.
\end{proof}
\begin{remark}[Rigidity is a property of the orthocomplemented structure]
\label{rem:rigidity}
The pairing is not determined by the bare lattice order.
The underlying modular lattice $M_4$ of $\MO_2$ admits three distinct orthocomplementations, one for each of the three ways of partitioning its four atoms into two complementary pairs, since any two distinct atoms of $M_4$ already have meet $0$ and join $1$.
What singles out the pairing $a \leftrightarrow a^\perp$, $b \leftrightarrow b^\perp$ is not the lattice $M_4$ but the orthocomplementation that $F_2$ carries as the free orthomodular lattice on $x$ and $y$, since in $F_2$ the generator $x$ has context $a$ and its orthocomplement $x^\perp$ has context $a^\perp$, and likewise $y$ and $y^\perp$ have contexts $b$ and $b^\perp$.
Since the isomorphism \eqref{eq:decomp-f2} is one of orthocomplemented lattices by Definition~\ref{def:operations}, it transports this orthocomplementation to the stated orthocomplementation of $\MO_2$, and the layer duality is rigid in that it is fixed by the orthocomplemented structure rather than chosen.
\end{remark}
\begin{corollary}[Order-two symmetry]
\label{cor:z2}
The group generated by ${}^\perp$, acting on $F_2$ as an involutive anti-automorphism, is $\bbZ_2$.
\end{corollary}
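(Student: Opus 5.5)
The plan is to identify the group generated by ${}^\perp$ directly inside the group of bijections of the underient set of $F_2$. Theorem~\ref{thm:functoriality} gives that ${}^\perp$ is a bijection with $(P^\perp)^\perp = P$ for every $P$, so $({}^\perp)^2 = \mathrm{id}_{F_2}$. The cyclic group $\langle {}^\perp \rangle$ generated by an element of order dividing two is therefore either trivial or of order two. To conclude that it is $\bbZ_2$, it remains to show that ${}^\perp \neq \mathrm{id}_{F_2}$.

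For this step I would exhibit a single element that is moved. Take $0_{F_2} = (0,0000)$. By Definition~\ref{def:operations}, $0_{F_2}^\perp = (1,1111) = 1_{F_2}$, and this differs from $0_{F_2}$ in both components. Alternatively, Proposition~\ref{prop:rigidity} shows that ${}^\perp$ exchanges the strata $B01$ and $B06$, which are disjoint. Either way ${}^\perp$ has order exactly two, and so $\langle {}^\perp \rangle = \{\mathrm{id}, {}^\perp\} \cong \bbZ_2$.

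Finally I would note where the group lives. Composition of ${}^\perp$ with itself is order-preserving, so it is natural to regard $\langle {}^\perp\rangle$ as a subgroup of the group of lattice automorphisms and anti-automorphisms of $F_2$. In that group, ${}^\perp$ is the non-identity coset representative, and the anti-automorphism property from Theorem~\ref{thm:functoriality} is exactly what makes the action by an involutive anti-automorphism well defined.

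There is no real obstacle in this argument. The only point needing care is to state that the group is the one generated inside the bijections, or the (anti-)automorphisms, of $F_2$, and not some abstract group. Once that is fixed, nontriviality of ${}^\perp$ is the only content, and it is immediate from $0 \neq 1$.
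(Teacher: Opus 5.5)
Your proof is correct and follows the paper's route: both take $({}^\perp)^2=\mathrm{id}$ from Theorem~\ref{thm:functoriality} and conclude that the generated group is $\bbZ_2$. Your explicit check that ${}^\perp\neq\mathrm{id}$, via $0_{F_2}^\perp = 1_{F_2} \neq 0_{F_2}$, supplies the nontriviality step that the paper's one-line proof leaves implicit when it asserts order two.
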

\begin{proof}
By Theorem~\ref{thm:functoriality} the map ${}^\perp$ is an anti-automorphism with ${}^\perp\circ{}^\perp = \mathrm{id}$, so it has order two and generates a group isomorphic to $\bbZ_2$.
\end{proof}
This $\bbZ_2$ is generated by an anti-automorphism and is not the full automorphism group of $F_2$, which is larger and contains, for example, the automorphism that exchanges the two generators $x$ and $y$.
Table~\ref{tab:duality} records the layer pairings, and Proposition~\ref{prop:rigidity} shows that this table is determined by the $\MO_2$ factor rather than being a contingent feature of $F_2$.
\begin{table}[htbp]
\caption{\label{tab:duality}The layer pairings induced by orthocomplementation. By Proposition~\ref{prop:rigidity} these pairings are exactly the orthocomplementation of the $\MO_2$ factor.}
\centering
\begin{tabular}{c c c c l}
\hline\hline
Layer & Context & Dual layer & Dual context & Pairing \\
\hline
$B01$ & $0$ (central) & $B06$ & $1$ (central) & vertical \\
$B02$ & $a$ & $B05$ & $a^\perp$ & horizontal \\
$B03$ & $b$ & $B04$ & $b^\perp$ & horizontal \\
$B06$ & $1$ (central) & $B01$ & $0$ (central) & vertical \\
\hline\hline
\end{tabular}
\end{table}
In a Boolean algebra orthocomplementation acts uniformly on all elements and produces a single duality pair.
In $F_2$ it acts on three pairs of layers and is context-dependent through the horizontal dualities, which encode the complementarity of the two generators.
The difference is that in $F_2$ the orthocomplement carries the $\MO_2$ stratification with it, so that negation rearranges the space of contexts and not only the truth-functional content.

\section{Classical Logic as a Context-Forgetting Projection}
\label{sec:projection}
The constructions of the previous sections converge on a single map, which is the main object of the paper.
It forgets the context factor of $F_2 \cong \MO_2 \times 2^4$ and keeps only the Boolean content,
\eq{
  \pi_{\mathrm{forget}}\colon F_2 \to 2^4, \qquad (c,\bvec b) \mapsto \bvec b. \label{eq:pi-forget}
}
The following proposition shows that this map turns the informal slogan ``classical logic is a quotient of quantum logic'' into a precise statement about orthocomplemented lattices.
\begin{proposition}[Classical logic as an orthocomplemented quotient]
\label{prop:boolean-quotient}
The map $\pi_{\mathrm{forget}}$ of Eq.~\eqref{eq:pi-forget} is a surjective homomorphism of orthocomplemented lattices, that is, it preserves $\wedge$, $\vee$, ${}^\perp$, the bottom, and the top. The relation $\sim$ defined by $P \sim P'$ if and only if $\pi_{\mathrm{forget}}(P) = \pi_{\mathrm{forget}}(P')$ is a lattice congruence, each of its classes has exactly six elements, and the quotient is the classical Boolean algebra,
\eq{
  F_2 / {\sim} \;\cong\; 2^4. \nonumber
}
\end{proposition}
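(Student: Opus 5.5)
The plan is to reduce everything to the component-by-component description of Definition~\ref{def:operations}, which encodes the isomorphism \eqref{eq:decomp-f2} as one of orthocomplemented lattices. Under this identification $\pi_{\mathrm{forget}}$ is simply the second coordinate projection $\MO_2 \times 2^4 \to 2^4$, and coordinate projections of a direct product preserve whatever operations are defined coordinatewise.

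\emph{Homomorphism property.} For $P_1=(c_1,\bvec b_1)$ and $P_2=(c_2,\bvec b_2)$, Definition~\ref{def:operations} gives
\[
\pi_{\mathrm{forget}}(P_1\wedge P_2)=\bvec b_1\wedge_{2^4}\bvec b_2 ,
\]
and likewise $\pi_{\mathrm{forget}}(P_1\vee P_2)=\bvec b_1\vee_{2^4}\bvec b_2$ and $\pi_{\mathrm{forget}}(P^\perp)=\bvec b^\perp_{2^4}$. For the constants, the global bottom is $0_{F_2}=(0,0000)$ and the global top is $1_{F_2}=(1,1111)$ (Tables~\ref{tab:b01} and~\ref{tab:b06}). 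These map to $0000$ and $1111$, the bottom and top of $2^4$. One point needs care: the element $\com(x,y)^\perp=(1,0000)$ also maps to $0000$. This is harmless, because preservation of constants concerns only the images of $0_{F_2}$ and $1_{F_2}$.

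\emph{Surjectivity and fibres.} Surjectivity is immediate, since $(0,\bvec b)\mapsto\bvec b$ for every $\bvec b\in 2^4$. The fibre over $\bvec b$ is $\{(c,\bvec b)\mid c\in\MO_2\}$, which has exactly $|\MO_2|=6$ elements. Concretely, the fibre contains one element from each layer $B01$--$B06$, and $6\times 16=96$ checks against Beran's count.

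\emph{Congruence and quotient.} The relation $\sim$ is the kernel of a homomorphism, so it is automatically an equivalence relation compatible with $\wedge$, $\vee$ and ${}^\perp$. For example, if $P\sim P'$ and $Q\sim Q'$, then
\[
\pi_{\mathrm{forget}}(P\wedge Q)=\pi_{\mathrm{forget}}(P)\wedge\pi_{\mathrm{forget}}(Q)=\pi_{\mathrm{forget}}(P')\wedge\pi_{\mathrm{forget}}(Q')=\pi_{\mathrm{forget}}(P'\wedge Q').
\]
The first isomorphism theorem for algebras of the signature $(\wedge,\vee,{}^\perp,0,1)$ then gives $F_2/{\sim}\cong\pi_{\mathrm{forget}}(F_2)=2^4$, the isomorphism being induced by $[P]\mapsto\pi_{\mathrm{forget}}(P)$. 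I would also verify directly that this induced map is well defined and bijective, so the argument does not depend on citing universal algebra.

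\emph{Main obstacle.} There is no real computational difficulty; the subtle point is conceptual. Everything rests on Definition~\ref{def:operations} being a genuine isomorphism of orthocomplemented lattices, so that $^\perp$ on $F_2$ really is coordinatewise. If that is doubted, I would argue from Theorem~\ref{thm:decomp}: the decomposition $P\mapsto(P\wedge\com^\perp,\,P\wedge\com)$ along the central element $\com(x,y)$ preserves orthocomplement relative to each interval. This is a standard property of central elements, and it is where the work hidden in \eqref{eq:decomp-f2} actually lies.
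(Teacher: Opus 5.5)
Your proposal is correct and follows the paper's proof essentially step for step. Both use the coordinatewise operations of Definition~\ref{def:operations}, the constants $(0,0000)\mapsto 0000$ and $(1,1111)\mapsto 1111$, surjectivity via $(0,\bvec b)\in B01$, fibres $\{(c',\bvec b): c'\in\MO_2\}$ of size six, $\sim$ as the kernel congruence, and the homomorphism theorem. Your one refinement is to apply the isomorphism theorem in the full signature $(\wedge,\vee,{}^\perp,0,1)$, which makes explicit that $F_2/{\sim}\cong 2^4$ holds as orthocomplemented lattices; the paper cites only the lattice homomorphism theorem.
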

\begin{proof}
By Definition~\ref{def:operations} the operations of $F_2$ act on the two factors of $\MO_2 \times 2^4$ independently, so the second projection satisfies $\pi_{\mathrm{forget}}(P_1 \wedge P_2) = \bvec b_1 \wedge \bvec b_2$, $\pi_{\mathrm{forget}}(P_1 \vee P_2) = \bvec b_1 \vee \bvec b_2$, and $\pi_{\mathrm{forget}}(P^\perp) = \neg\bvec b$.
It sends the bottom $0_{F_2} = (0,0000)$ to $0000$ and the top $1_{F_2} = (1,1111)$ to $1111$.
Hence $\pi_{\mathrm{forget}}$ is a homomorphism of orthocomplemented lattices, and it is surjective because every bit vector $\bvec b$ is attained, for instance by $(0,\bvec b) \in B01$.
The relation $\sim$ is the kernel of this homomorphism, and the kernel of a lattice homomorphism is a congruence, so $\sim$ is a congruence.
Its class through $(c,\bvec b)$ is the fiber $\pi_{\mathrm{forget}}^{-1}(\bvec b) = \{(c',\bvec b) : c' \in \MO_2\}$, which has $|\MO_2| = 6$ elements.
By the homomorphism theorem for lattices the quotient $F_2/{\sim}$ is isomorphic to the image of $\pi_{\mathrm{forget}}$, which is all of $2^4$.
\end{proof}
Proposition~\ref{prop:boolean-quotient} has three consequences that organize the rest of our reading.
First, classical logic is a six-to-one, information-losing image of the contextual calculus, because each of the sixteen Boolean operations is the image of exactly six elements of $F_2$, one for each context, and $96/16 = 6$.
Second, the projection is not invertible.
Distinct elements that share a bit vector, including elements in different layers, are identified, so no rule recovers the context from the Boolean record alone, and classical logic is strictly weaker than the contextual calculus.
Third, $F_2$ is a conservative extension of the Boolean fragment, since the kernel $B01$ is a copy of $2^4$ inside $F_2$, as a lattice, on which $\pi_{\mathrm{forget}}$ is a bijection, while $F_2$ also carries the inter-context operations of Theorem~\ref{thm:context-dynamics}, which have no counterpart in context-free classical logic.
The map $\pi_{\mathrm{forget}}$ is implicit in the decomposition $F_2 \cong \MO_2 \times 2^4$~\cite{beran1985} and in the decomposition theorem~\cite{ozawa2016,ozawa2021a,ozawa2026}, but it has not been isolated, named, and studied as an operation in its own right, which is the gap this section fills.

\subsection{The physical content of the forgetting projection}
\label{sec:physical}
Within quantum logic alone the projection \eqref{eq:pi-forget} is a purely formal operation.
It is a homomorphism of orthocomplemented lattices that deletes the context index, and nothing in the lattice structure by itself marks it as a physical process.
The projection acquires physical content only when one adjoins the probabilistic layer of quantum mechanics, namely the representation of an observable by a projection-valued measure and the Born statistical formula that assigns to a state and a projection the probability of the corresponding outcome.
If the propositions of the logic are realized as projections on a Hilbert space, after Birkhoff and von Neumann~\cite{birkhoff1936}, then forgetting the context corresponds to a projective measurement~\cite{ozawa1984} whose readout is the context-free Boolean record of which truth function obtains.
Within the measurement-strength framework~\cite{emori2026cc}, supported mathematically by a cognitive path-integral model~\cite{emori2026pi}, this context-forgetting projection corresponds to the strong-measurement limit.
It is the regime in which the measurement coupling to the neural environment carries a quantum-logical proposition onto a classical proposition, producing a definite and reportable outcome.
In cognitive and experimental terms, $\pi_{\mathrm{forget}}$ is the operation a task performs when it asks for one reportable answer.
The response record keeps the Boolean outcome, such as yes or no, left or right, or face or non-face, but it does not keep the context that made the outcome meaningful, namely the question order, the task instruction, the attentional set, or the measurement basis selected by the experiment.
Classical logic is therefore adequate for the final record but not for the generative process that produces it.
The forgetting projection is an algebraic counterpart of Bohr's reading of complementarity~\cite{bohr1928,bohr1937, bohr1958}, on which observables have no fixed meaning independent of an experimental context, so that recovering a context-free Boolean record is the act of fixing a context by measurement.
The contextual hidden-variable theories characterized through quantum set theory~\cite{ozawa2022hv} sit in the same conceptual place, because there too a classical, context-free description is available on the commutative part.

This reading is consistent with the transfer principle of quantum set theory~\cite{ozawa2021a}, which bounds the quantum truth value of a theorem of set theory from below by the commutator,
\eq{
  [\![\,\phi(u_1,\dots,u_n)\,]\!]_Q \;\geq\; \com(u_1,\dots,u_n), \label{eq:transfer}
}
where $[\![\,\phi(u_1,\dots,u_n)\,]\!]_Q$ is the truth value of $\phi(u_1,\dots,u_n)$ in the quantum-set-theoretic universe $V^{(Q)}$.
Classical reasoning is valid to the extent that the commutator is large, that is, on the commutative part, and the forgetting projection selects this classical face at the level of the two-generator logic.
Contextual superposition, embodied by the layer $B06$, is already present in $F_2$, while entanglement is not.
Entanglement requires contextuality together with the additional probabilistic and tensor-product structure of a Hilbert-space description, which lies beyond the present formal analysis \cite{gunji2026b}.
In one sentence, classical logic is the shadow that a contextual calculus casts when the context is measured away.

\section{Discussion}
\label{sec:discussion}
We relate the construction to three neighboring frameworks, namely Takeuti's three logics, the measurement-strength account of cognition, and the sheaf- and topos-theoretic treatments of contextuality.
Each comparison is structural, and we mark the points where a tempting identification fails.

\subsection{Takeuti's three logics, and a quotient that does not exist}
\label{sec:takeuti}
This subsection blocks a tempting but incorrect simplification of the measurement-strength reading, namely that the intermediate regime could be obtained by quotienting the contextual logic onto a three-valued distributive chain.
Intermediate measurement is not a halfway Boolean logic, since it requires selecting a context without forgetting it entirely, and we show that the natural chain-valued image does not exist as a lattice quotient.

Takeuti~\cite{takeuti1981b} distinguishes three kinds of logic, which he calls God's logic, Human's logic, and the logic of things, and associates the third with orthomodular structure.
The first two correspond to objects in our setting.
God's logic, the classical case, is the Boolean image $2^4$ obtained after $\pi_{\mathrm{forget}}$, in which a unique Boolean complement exists.
The logic of things is the full lattice $F_2$ with its inter-context operations and its non-distributive behavior across contexts.
Human's logic, the intuitionistic case, is meant to be an intermediate, distributive regime in which inference proceeds inside one fixed context at a time.

It is tempting to model this intermediate regime by merging the four single contexts $a,b,a^\perp,b^\perp$ into one value $m$, so as to obtain a three-element chain $C_3 = \{0 < m < 1\}$ and an image of the form $C_3 \times 2^4$ carrying a Heyting structure.
This cannot be done as a lattice quotient of $F_2$.
The factor $\MO_2$, which is the modular ortholattice $M_4$, is a simple lattice, so its only congruences are the identity and the total congruence, and its only quotients are $\MO_2$ itself and the one-element lattice.
Lattices form a congruence-distributive variety, in which a congruence on a finite direct product is a product of congruences on the factors, so every quotient of $F_2 \cong \MO_2 \times 2^4$ has the form (quotient of $\MO_2$) $\times$ (quotient of $2^4$).
The first factor is either $\MO_2$ or trivial, and is never the three-element chain $C_3$.
Indeed, since $\MO_2$ is simple, its only lattice quotient other than the one-element lattice is $\MO_2$ itself, which is non-distributive, so $\MO_2$ admits no non-trivial homomorphism onto a distributive lattice at all, and even its largest distributive quotient collapses it entirely.
We therefore do not present $C_3 \times 2^4$ as a quotient of $F_2$.

A Heyting or bi-Heyting image, if one is wanted, should be built not by quotienting the simple factor $\MO_2$ but by a different route, for instance by passing to the poset of contexts and the distributive lattice of its down-sets, on which a relative pseudocomplement is available, or by a homomorphism from $F_2$ onto a suitable distributive lattice.
We leave a rigorous construction to future work and treat the intermediate regime here as a heuristic analogy whose precise form is not yet fixed.
Within this guarded reading our analysis also locates Takeuti's observation that intuitionistic logic, although it looks more distant from classical logic than quantum logic does, is in a structural sense closer.
Intuitionistic and classical logic are both distributive as lattices, while $F_2$ is non-distributive because its operations update a context index in addition to the Boolean content, as in $(a,1100) \wedge (b,1010) = (0,1000)$, where a context collapse accompanies the Boolean meet.
A distributive image has an internal implication and no separate context being updated, which is the respect in which it stays closer to the Boolean case than $F_2$ does.

\subsection{A reading through measurement strength}
\label{sec:meas-strength}
The layer picture gives a vocabulary, independent of any physical realization, for how the operations preserve or discard context, and it makes contact with the measurement-strength program~\cite{emori2026cc,emori2026pi,ozawa2023,fuyama2025}.
As a guide to translation, and not as an identification, one may distinguish three regimes.
A weak, context-preserving regime keeps several incompatible candidates in play, and is modeled by remaining in, or returning to, the central layer $B06$.
An intermediate regime selects a single-context viewpoint while keeping non-Boolean structure, and is modeled by a transition from $B06$ to one of $B02$--$B05$, in which the context is narrowed but not eliminated.
A strong regime discards the remaining context, and the content is then expressed by its image under the context-forgetting projection $\pi_{\mathrm{forget}}$ of Section~\ref{sec:projection}, that is, by a class of the quotient $F_2/{\sim}\,\cong\,2^4$.
The kernel $B01$ is a canonical transversal of this quotient, one representative per class, and $\pi_{\mathrm{forget}}$ restricts to a lattice isomorphism $B01 \to 2^4$.
This restriction is not an isomorphism of orthocomplemented lattices, since $B01$ is not closed under ${}^\perp$, the orthocomplement of $(0,\bvec b)$ being $(1,\neg\bvec b) \in B06$ by Theorem~\ref{thm:vertical}.
The strong regime is therefore the Boolean quotient itself, of which $B01$ furnishes the canonical representatives, rather than an orthocomplemented sublayer.
The comparison shows that strength need not be a purely quantitative notion, because the selection of a context and the elimination of a context are different kinds of information loss.
At the level of quantum set theory the transfer principle \eqref{eq:transfer} makes the same point from the semantic side, since classical theorems hold where the commutator is maximal and only weaker, context-dependent truth values are available elsewhere.

\subsection{Relation to sheaf- and topos-theoretic approaches}
\label{sec:sheaf-topos}
We have studied the internal structure of $F_2$ and its decomposition, without treating the global gluing of such local pieces over a full measurement cover.
This local role is complementary to the sheaf- and topos-theoretic approaches to contextuality, in which the commutative subalgebras of a von Neumann algebra form a measurement cover and contextuality is the absence of a global section of the associated spectral presheaf~\cite{isham1998,doring2008,abramsky2011}.
By the universal property of the free orthomodular lattice, the two-generator fragment $\langle p,q\rangle$ of any orthomodular lattice is a quotient of $F_2$, and in the simplest case the $\MO_2$ factor is the horizontal sum, equivalently the pushout along the two-element chain, of the two single-generator Boolean contexts.
This construction is classical in the orthomodular literature~\cite{greechie1971,kalmbach1983} and has recently been recast as a left adjoint by Gunji \textit{et al.}~\cite{gunji2026a}.
In this way $F_2$ is a finite and explicitly computable local model, whose global organization is treated on the set-theoretic side by the commutator and the transfer principle of quantum set theory, and on the sheaf-theoretic side by measurement covers and spectral presheaves.
The relation between these two global descriptions is itself the subject of recent work bridging quantum set theory and topos quantum theory~\cite{doring2021}.

\section{Conclusion}
\label{sec:conclusion}
These results present the logical component of a three-paper measurement-strength program, in which the present paper supplies the operation by which a contextual cognitive process becomes a classical, reportable record.
We have made the product presentation $F_2 \cong \MO_2 \times 2^4$ operational through an explicit context--bit-vector calculus, and we have isolated the context-forgetting projection $\pi_{\mathrm{forget}}$ as the operation that turns the contextual logic into classical logic.
The six layers of the lattice are classified by commutativity in Proposition~\ref{prop:commute}, which identifies the Boolean kernel $B01$ and the central superposition layer $B06$ as the two central layers and the four single-context layers $B02$--$B05$ as the non-central ones.
The dualities among the layers are exactly the orthocomplementation of the $\MO_2$ factor, by Proposition~\ref{prop:rigidity}, and are therefore rigid rather than contingent.
The main result, Proposition~\ref{prop:boolean-quotient}, is that $\pi_{\mathrm{forget}}$ is a surjective homomorphism of orthocomplemented lattices whose quotient is the classical Boolean algebra $2^4$, so that classical logic is a uniform six-to-one image of the contextual calculus, and we have explained how this forgetting acquires physical content once the Born statistical formula and projective measurement are adjoined.
We have also shown that Takeuti's intermediate, intuitionistic regime cannot be realized as a lattice quotient, because the relevant factor $\MO_2$ is simple, and we have therefore kept that regime as a heuristic analogy.

The broader reading these results support is that, in the two-generator free setting, classical truth-functional structure is an information-losing image of a richer context-sensitive calculus, so that contextuality is the generic situation and the classical regime is a limit.
Whenever several irreducible contexts coexist one should expect orthomodular features rather than treat them as anomalies, a stance consistent with the characterization of contextual hidden-variable theories in quantum set theory~\cite{ozawa2022hv}.

The results also fix what the present paper contributes to the companion works.
Ref.~\cite{emori2026pi} supplies the dynamics along the measurement-strength axis, and Ref.~\cite{emori2026cc} applies that axis to theories of consciousness and to experimental protocols.
The context-forgetting projection isolated here is the logical form of the strong-measurement limit used in both, so a reader who begins with the present paper has, in $\pi_{\mathrm{forget}}$, a reason to read the dynamical and empirical companions.
Future work includes a systematic categorical formulation, a rigorous construction of the intermediate distributive regime by a route other than quotienting the simple factor $\MO_2$, and a tighter interface between $F_2$ as a formal context skeleton and the commutator control of quantum set theory, in which states, observables, and operations enter explicitly \cite{ozawa2023}.

\begin{acknowledgments}
We are grateful to Masanao Ozawa for valuable discussions.
We thank the organizers and participants at the Ernst Str{\"{u}}ngmann Forum ``Simplicity behind Absurdity: The Power of Quantum Thinking'' held in Frankfurt in September 2025, where this work was initiated.
This work was supported by JST ASPIRE Grant Number JPMJAP2318, JSPS KAKENHI Grant Numbers 24H02200 and 26H02539, JST SPRING Grant Number JPMJSP2119, and the RIKEN Junior Research Associate Program.
\end{acknowledgments}

\appendix

\section{Complete Enumeration of \texorpdfstring{$F_2$}{F2}}
\label{app:enumeration}
This appendix lists all ninety-six elements of $F_2$, organized into the six layers $B01$--$B06$.
Each element is given by its Beran number~\cite{beran1985}, by an ortholattice polynomial in $x,y,x^\perp,y^\perp$, and by its vector $(c,b_1b_2b_3b_4)$ with context $c \in \{0,a,b,a^\perp,b^\perp,1\}$ and $(b_1,b_2,b_3,b_4) \in 2^4$.
The note column flags the named two-variable operation an element realizes: the Sasaki ($\mathrm S$), contrapositive Sasaki ($\mathrm{CPS}$), and relevance ($\mathrm R$) conditionals, the corresponding dual conjunctions $\dcS$, $\dcC$, $\dcR$, the two further Kotas--Ozawa operations $\Rightarrow_1$, $\Rightarrow_4$, the generators, the layer tops $\top_{B0k}$, and the global bounds $0_{F_2}$, $1_{F_2}$.
An element with no such name, but whose bit vector matches that of conjunction or disjunction, is marked $\wedge$ or $\vee$.
The four bits record the value of the element on the four atoms of the Boolean subalgebra generated by $x$ and $y$, in the order
\eq{
  (b_1,b_2,b_3,b_4) \;\longleftrightarrow\; (x \wedge y,\ x \wedge y^\perp,\ x^\perp \wedge y,\ x^\perp \wedge y^\perp), \nonumber
}
which is the order consistent with every vector below.
Each triple of Beran number, formula, and vector has been verified by evaluating the formula in the model $F_2 \cong \MO_2 \times 2^4$ of Definition~\ref{def:operations}, and the ninety-six vectors are pairwise distinct and exhaust the product $\MO_2 \times 2^4$.

\subsection{\texorpdfstring{Layer $B01$: Boolean kernel (context $0$)}{Layer B01: Boolean kernel (context 0)}}
\label{app:b01}
\begin{longtable}{c l c l}
\hline\hline
Beran No. & Formula & Vector & Note \\
\hline
\endfirsthead
\hline\hline
Beran No. & Formula & Vector & Note \\
\hline
\endhead
\hline\hline
\endfoot
01 & $0 = x \wedge y^\perp \wedge x^\perp \wedge y$ & $(0,0000)$ & $0_{F_2}$ \\
02 & $x \wedge y$ & $(0,1000)$ & $\wedge$ \\
03 & $x \wedge y^\perp$ & $(0,0100)$ & \\
04 & $x^\perp \wedge y$ & $(0,0010)$ & \\
05 & $x^\perp \wedge y^\perp$ & $(0,0001)$ & \\
06 & $(x \wedge y) \vee (x \wedge y^\perp)$ & $(0,1100)$ & \\
07 & $(x \wedge y) \vee (x^\perp \wedge y)$ & $(0,1010)$ & \\
08 & $(x \wedge y) \vee (x^\perp \wedge y^\perp)$ & $(0,1001)$ & \\
09 & $(x \wedge y^\perp) \vee (x^\perp \wedge y)$ & $(0,0110)$ & \\
10 & $(x^\perp \wedge y^\perp) \vee (x \wedge y^\perp)$ & $(0,0101)$ & \\
11 & $(x^\perp \wedge y^\perp) \vee (x^\perp \wedge y)$ & $(0,0011)$ & \\
12 & $(x \wedge y) \vee (x \wedge y^\perp) \vee (x^\perp \wedge y)$ & $(0,1110)$ & $\vee$ \\
13 & $(x \wedge y) \vee (x \wedge y^\perp) \vee (x^\perp \wedge y^\perp)$ & $(0,1101)$ & \\
14 & $(x \wedge y) \vee (x^\perp \wedge y) \vee (x^\perp \wedge y^\perp)$ & $(0,1011)$ & $\mathrm R$ \\
15 & $(x \wedge y^\perp) \vee (x^\perp \wedge y) \vee (x^\perp \wedge y^\perp)$ & $(0,0111)$ & \\
16 & $(x \wedge y) \vee (x \wedge y^\perp) \vee (x^\perp \wedge y) \vee (x^\perp \wedge y^\perp)$ & $(0,1111)$ & $\top_{B01}$ \\
\end{longtable}

\subsection{\texorpdfstring{Layer $B02$: context $a = x \wedge \com(x,y)^\perp$}{Layer B02: context a}}
\label{app:b02}
\begin{longtable}{c l c l}
\hline\hline
Beran No. & Formula & Vector & Note \\
\hline
\endfirsthead
\hline\hline
Beran No. & Formula & Vector & Note \\
\hline
\endhead
\hline\hline
\endfoot
17 & $x \wedge (x^\perp \vee y) \wedge (x^\perp \vee y^\perp)$ & $(a,0000)$ & \\
18 & $x \wedge (x^\perp \vee y)$ & $(a,1000)$ & $\dcS$ \\
19 & $x \wedge (x^\perp \vee y^\perp)$ & $(a,0100)$ & \\
20 & $(x^\perp \wedge y) \vee \bigl(x \wedge (x^\perp \vee y) \wedge (x^\perp \vee y^\perp)\bigr)$ & $(a,0010)$ & \\
21 & $(x^\perp \wedge y^\perp) \vee \bigl(x \wedge (x^\perp \vee y) \wedge (x^\perp \vee y^\perp)\bigr)$ & $(a,0001)$ & \\
22 & $x$ & $(a,1100)$ & generator \\
23 & $(x^\perp \vee y) \wedge \bigl(x \vee (x^\perp \wedge y)\bigr)$ & $(a,1010)$ & \\
24 & $(x^\perp \vee y) \wedge \bigl(x \vee (x^\perp \wedge y^\perp)\bigr)$ & $(a,1001)$ & \\
25 & $(x^\perp \vee y^\perp) \wedge \bigl(x \vee (x^\perp \wedge y)\bigr)$ & $(a,0110)$ & \\
26 & $(x^\perp \vee y^\perp) \wedge \bigl(x \vee (x^\perp \wedge y^\perp)\bigr)$ & $(a,0101)$ & \\
27 & $(x^\perp \vee y^\perp) \wedge (x^\perp \vee y) \wedge \bigl(x \vee (x^\perp \wedge y^\perp) \vee (x^\perp \wedge y)\bigr)$ & $(a,0011)$ & \\
28 & $x \vee (x^\perp \wedge y)$ & $(a,1110)$ & $\vee$ \\
29 & $x \vee (x^\perp \wedge y^\perp)$ & $(a,1101)$ & \\
30 & $(x^\perp \vee y) \wedge \bigl(x \vee (x^\perp \wedge y) \vee (x^\perp \wedge y^\perp)\bigr)$ & $(a,1011)$ & $\Rightarrow_1$ \\
31 & $(x^\perp \vee y^\perp) \wedge \bigl(x \vee (x^\perp \wedge y) \vee (x^\perp \wedge y^\perp)\bigr)$ & $(a,0111)$ & \\
32 & $x \vee (x^\perp \wedge y) \vee (x^\perp \wedge y^\perp)$ & $(a,1111)$ & $\top_{B02}$ \\
\end{longtable}

\subsection{\texorpdfstring{Layer $B03$: context $b = y \wedge \com(x,y)^\perp$}{Layer B03: context b}}
\label{app:b03}
\begin{longtable}{c l c l}
\hline\hline
Beran No. & Formula & Vector & Note \\
\hline
\endfirsthead
\hline\hline
Beran No. & Formula & Vector & Note \\
\hline
\endhead
\hline\hline
\endfoot
33 & $y \wedge (y^\perp \vee x) \wedge (y^\perp \vee x^\perp)$ & $(b,0000)$ & \\
34 & $y \wedge (y^\perp \vee x)$ & $(b,1000)$ & $\dcC$ \\
35 & $(x \wedge y^\perp) \vee \bigl(y \wedge (y^\perp \vee x^\perp) \wedge (y^\perp \vee x)\bigr)$ & $(b,0100)$ & \\
36 & $y \wedge (y^\perp \vee x^\perp)$ & $(b,0010)$ & \\
37 & $(x^\perp \wedge y^\perp) \vee \bigl(y \wedge (y^\perp \vee x) \wedge (y^\perp \vee x^\perp)\bigr)$ & $(b,0001)$ & \\
38 & $(x \vee y^\perp) \wedge \bigl(y \vee (y^\perp \wedge x)\bigr)$ & $(b,1100)$ & \\
39 & $y$ & $(b,1010)$ & generator \\
40 & $(x \vee y^\perp) \wedge \bigl(y \vee (y^\perp \wedge x^\perp)\bigr)$ & $(b,1001)$ & \\
41 & $(x^\perp \vee y^\perp) \wedge \bigl(y \vee (x \wedge y^\perp)\bigr)$ & $(b,0110)$ & \\
42 & $(x^\perp \vee y^\perp) \wedge (x \vee y^\perp) \wedge \bigl(y \vee (x^\perp \wedge y^\perp) \vee (x \wedge y^\perp)\bigr)$ & $(b,0101)$ & \\
43 & $(x^\perp \vee y^\perp) \wedge \bigl(y \vee (y^\perp \wedge x^\perp)\bigr)$ & $(b,0011)$ & \\
44 & $y \vee (x \wedge y^\perp)$ & $(b,1110)$ & $\vee$ \\
45 & $(x \vee y^\perp) \wedge \bigl(y \vee (y^\perp \wedge x) \vee (y^\perp \wedge x^\perp)\bigr)$ & $(b,1101)$ & \\
46 & $y \vee (y^\perp \wedge x^\perp)$ & $(b,1011)$ & $\mathrm{CPS}$ \\
47 & $(x^\perp \vee y^\perp) \wedge \bigl(y \vee (y^\perp \wedge x) \vee (y^\perp \wedge x^\perp)\bigr)$ & $(b,0111)$ & \\
48 & $y \vee (y^\perp \wedge x) \vee (y^\perp \wedge x^\perp)$ & $(b,1111)$ & $\top_{B03}$ \\
\end{longtable}

\subsection{\texorpdfstring{Layer $B04$: context $b^\perp = y^\perp \wedge \com(x,y)^\perp$}{Layer B04: context b-perp}}
\label{app:b04}
\begin{longtable}{c l c l}
\hline\hline
Beran No. & Formula & Vector & Note \\
\hline
\endfirsthead
\hline\hline
Beran No. & Formula & Vector & Note \\
\hline
\endhead
\hline\hline
\endfoot
49 & $y^\perp \wedge (x^\perp \vee y) \wedge (x \vee y)$ & $(b^\perp,0000)$ & \\
50 & $(x \wedge y) \vee \bigl(y^\perp \wedge (x^\perp \vee y) \wedge (x \vee y)\bigr)$ & $(b^\perp,1000)$ & $\wedge$ \\
51 & $y^\perp \wedge (x \vee y)$ & $(b^\perp,0100)$ & \\
52 & $(x^\perp \wedge y) \vee \bigl(y^\perp \wedge (x^\perp \vee y) \wedge (x \vee y)\bigr)$ & $(b^\perp,0010)$ & \\
53 & $y^\perp \wedge (x^\perp \vee y)$ & $(b^\perp,0001)$ & \\
54 & $(x \vee y) \wedge \bigl(y^\perp \vee (x \wedge y)\bigr)$ & $(b^\perp,1100)$ & \\
55 & $(x \vee y) \wedge (x^\perp \vee y) \wedge \bigl(y^\perp \vee (x \wedge y) \vee (x^\perp \wedge y)\bigr)$ & $(b^\perp,1010)$ & \\
56 & $(x^\perp \vee y) \wedge \bigl(y^\perp \vee (x \wedge y)\bigr)$ & $(b^\perp,1001)$ & \\
57 & $(x \vee y) \wedge \bigl(y^\perp \vee (x^\perp \wedge y)\bigr)$ & $(b^\perp,0110)$ & \\
58 & $y^\perp$ & $(b^\perp,0101)$ & generator \\
59 & $(x^\perp \vee y) \wedge \bigl(y^\perp \vee (x^\perp \wedge y)\bigr)$ & $(b^\perp,0011)$ & \\
60 & $(x \vee y) \wedge \bigl(y^\perp \vee (y \wedge x^\perp) \vee (y \wedge x)\bigr)$ & $(b^\perp,1110)$ & $\vee$ \\
61 & $y^\perp \vee (x \wedge y)$ & $(b^\perp,1101)$ & \\
62 & $(x^\perp \vee y) \wedge \bigl(y^\perp \vee (x^\perp \wedge y) \vee (x \wedge y)\bigr)$ & $(b^\perp,1011)$ & $\Rightarrow_4$ \\
63 & $y^\perp \vee (y \wedge x^\perp)$ & $(b^\perp,0111)$ & \\
64 & $y^\perp \vee (x^\perp \wedge y) \vee (x \wedge y)$ & $(b^\perp,1111)$ & $\top_{B04}$ \\
\end{longtable}

\subsection{\texorpdfstring{Layer $B05$: context $a^\perp = x^\perp \wedge \com(x,y)^\perp$}{Layer B05: context a-perp}}
\label{app:b05}
\begin{longtable}{c l c l}
\hline\hline
Beran No. & Formula & Vector & Note \\
\hline
\endfirsthead
\hline\hline
Beran No. & Formula & Vector & Note \\
\hline
\endhead
\hline\hline
\endfoot
65 & $x^\perp \wedge (x \vee y^\perp) \wedge (x \vee y)$ & $(a^\perp,0000)$ & \\
66 & $(x \wedge y) \vee \bigl(x^\perp \wedge (x \vee y^\perp) \wedge (x \vee y)\bigr)$ & $(a^\perp,1000)$ & $\wedge$ \\
67 & $(x \wedge y^\perp) \vee \bigl(x^\perp \wedge (x \vee y^\perp) \wedge (x \vee y)\bigr)$ & $(a^\perp,0100)$ & \\
68 & $x^\perp \wedge (x \vee y)$ & $(a^\perp,0010)$ & \\
69 & $x^\perp \wedge (x \vee y^\perp)$ & $(a^\perp,0001)$ & \\
70 & $(x \vee y) \wedge (x \vee y^\perp) \wedge \bigl(x^\perp \vee (x \wedge y) \vee (x \wedge y^\perp)\bigr)$ & $(a^\perp,1100)$ & \\
71 & $(x \vee y) \wedge \bigl(x^\perp \vee (x \wedge y)\bigr)$ & $(a^\perp,1010)$ & \\
72 & $(x \vee y^\perp) \wedge \bigl(x^\perp \vee (x \wedge y)\bigr)$ & $(a^\perp,1001)$ & \\
73 & $(x \vee y) \wedge \bigl(x^\perp \vee (x \wedge y^\perp)\bigr)$ & $(a^\perp,0110)$ & \\
74 & $(x \vee y^\perp) \wedge \bigl(x^\perp \vee (x \wedge y^\perp)\bigr)$ & $(a^\perp,0101)$ & \\
75 & $x^\perp$ & $(a^\perp,0011)$ & generator \\
76 & $(x \vee y) \wedge \bigl(x^\perp \vee (x \wedge y^\perp) \vee (x \wedge y)\bigr)$ & $(a^\perp,1110)$ & $\vee$ \\
77 & $(x \vee y^\perp) \wedge \bigl(x^\perp \vee (x \wedge y^\perp) \vee (x \wedge y)\bigr)$ & $(a^\perp,1101)$ & \\
78 & $x^\perp \vee (x \wedge y)$ & $(a^\perp,1011)$ & $\mathrm S$ \\
79 & $x^\perp \vee (x \wedge y^\perp)$ & $(a^\perp,0111)$ & \\
80 & $x^\perp \vee (x \wedge y^\perp) \vee (x \wedge y)$ & $(a^\perp,1111)$ & $\top_{B05}$ \\
\end{longtable}

\subsection{\texorpdfstring{Layer $B06$: context $1 = \com(x,y)^\perp$}{Layer B06: context 1}}
\label{app:b06}
\begin{longtable}{c l c l}
\hline\hline
Beran No. & Formula & Vector & Note \\
\hline
\endfirsthead
\hline\hline
Beran No. & Formula & Vector & Note \\
\hline
\endhead
\hline\hline
\endfoot
81 & $(x \vee y) \wedge (x \vee y^\perp) \wedge (x^\perp \vee y^\perp) \wedge (x^\perp \vee y)$ & $(1,0000)$ & $\com(x,y)^\perp$ \\
82 & $(x^\perp \vee y) \wedge (x \vee y^\perp) \wedge (x \vee y)$ & $(1,1000)$ & $\dcR$ \\
83 & $(x^\perp \vee y^\perp) \wedge (x \vee y^\perp) \wedge (x \vee y)$ & $(1,0100)$ & \\
84 & $(x^\perp \vee y^\perp) \wedge (x^\perp \vee y) \wedge (x \vee y)$ & $(1,0010)$ & \\
85 & $(x^\perp \vee y^\perp) \wedge (x^\perp \vee y) \wedge (x \vee y^\perp)$ & $(1,0001)$ & \\
86 & $(x \vee y) \wedge (x \vee y^\perp)$ & $(1,1100)$ & \\
87 & $(x \vee y) \wedge (x^\perp \vee y)$ & $(1,1010)$ & \\
88 & $(x^\perp \vee y) \wedge (x \vee y^\perp)$ & $(1,1001)$ & \\
89 & $(x^\perp \vee y^\perp) \wedge (x \vee y)$ & $(1,0110)$ & \\
90 & $(x^\perp \vee y^\perp) \wedge (x \vee y^\perp)$ & $(1,0101)$ & \\
91 & $(x^\perp \vee y^\perp) \wedge (x^\perp \vee y)$ & $(1,0011)$ & \\
92 & $x \vee y$ & $(1,1110)$ & $\vee$ \\
93 & $x \vee y^\perp$ & $(1,1101)$ & \\
94 & $x^\perp \vee y$ & $(1,1011)$ & $x \to y$ \\
95 & $x^\perp \vee y^\perp$ & $(1,0111)$ & \\
96 & $1$ & $(1,1111)$ & $1_{F_2}$ \\
\end{longtable}
In layer $B06$ the context $\com(x,y)^\perp$ is the top of the $\MO_2$ factor, and the global top $1_{F_2} = (1,1111)$ resides here, together with $\com(x,y)^\perp = (1,0000)$, which is the bottom of this layer but not the global bottom, since the global bottom $0_{F_2} = (0,0000)$ lies in $B01$ (No.~01).

\section{Worked Computations and Verification}
\label{app:examples}
This appendix collects worked examples of the context transitions and records the computational verification used in the main text.
Every computation uses the component-by-component operations of Definition~\ref{def:operations}.

\paragraph*{Meet and context collapse.}
A cross-layer meet collapses to the kernel.
Take $P_1 = (a,1100)$, which is No.~22, the generator $x$ in $B02$, and $P_2 = (b,1010)$, which is No.~39, the generator $y$ in $B03$.
The context meet is $a \wedge_{\MO_2} b = 0$, because $a$ and $b$ are distinct atoms, and the bit meet is $1100 \wedge 1010 = 1000$, so $P_1 \wedge P_2 = (0,1000)$, which is No.~02, the conjunction $x \wedge y$ in $B01$.
A within-layer meet preserves the layer.
For $P_1 = (a,1100)$ and $P_2 = (a,0101)$ one has $a \wedge_{\MO_2} a = a$ and $1100 \wedge 0101 = 0100$, so $P_1 \wedge P_2 = (a,0100)$ remains in $B02$.

\paragraph*{Join and context expansion.}
A cross-layer join ascends to $B06$.
For $P_1 = (a,1000)$ and $P_2 = (b,0100)$ the context join is $a \vee_{\MO_2} b = \com(x,y)^\perp = 1$, because distinct atoms join to the top of $\MO_2$, and the bit join is $1000 \vee 0100 = 1100$, so $P_1 \vee P_2 = (1,1100)$ in $B06$.
A join of the neutral context with a single context adopts the single context.
For $P_1 = (0,0110)$ and $P_2 = (a,1000)$ one has $0 \vee_{\MO_2} a = a$ and $0110 \vee 1000 = 1110$, so $P_1 \vee P_2 = (a,1110)$ in $B02$.

\paragraph*{Orthocomplement and the dualities.}
For the vertical duality take $P = (0,1010)$ in $B01$.
The context complement is $0^\perp = \com(x,y)^\perp = 1$ and the bit complement is $\neg 1010 = 0101$, so $P^\perp = (1,0101)$ in $B06$.
For the horizontal duality take $P = (a,1100)$, which is No.~22, the generator $x$.
The context complement is $a^\perp$ and the bit complement is $\neg 1100 = 0011$, so $P^\perp = (a^\perp,0011)$, which is No.~75, the complement $x^\perp$ in $B05$.
The same computation with $b$ in place of $a$ gives the pairing of $B03$ with $B04$ through $y \leftrightarrow y^\perp$.

\paragraph*{The commutator of a cross-layer pair.}
For $P = (a,1100)$ and $Q = (b,1010)$ we compute the central commutator from the two-element closed form of Eq.~\eqref{eq:com-two}, which by Theorem~2.1(i) of Ozawa~\cite{ozawa2026} is the central commutator itself and not a separate construction.
We have $P \wedge Q = (0,1000)$, $Q^\perp = (b^\perp,0101)$ so $P \wedge Q^\perp = (0,0100)$, $P^\perp = (a^\perp,0011)$ so $P^\perp \wedge Q = (0,0010)$, and $P^\perp \wedge Q^\perp = (0,0001)$.
Their join is $(0,1111)$, which is $1_{B01}$, the top of the kernel.
Therefore $\com(x,y) = (0,1111)$ and $\com(x,y)^\perp = (1,0000)$.
The two generators reach maximal commutativity inside the Boolean kernel but do not commute globally, in agreement with Proposition~\ref{prop:commute}.

\paragraph*{Verification.}
We enumerated all ninety-six elements of $F_2$ as the Cartesian product of the six contexts of $\MO_2$ with the sixteen bit vectors, and evaluated every formula of Appendix~\ref{app:enumeration} in the model by the component-by-component rules of Definition~\ref{def:operations}.
Three checks were carried out.
First, each formula reproduces the vector listed beside it, and the ninety-six vectors are pairwise distinct and exhaust the product $\MO_2 \times 2^4$.
Second, the orthomodular law, that $P \leq Q$ implies $Q = P \vee (Q \wedge P^\perp)$, holds for every ordered pair among the ninety-six elements.
Third, the six Kotas--Ozawa conditionals~\cite{kotas1967,ozawa2026} are reproduced, with $\Rightarrow_1$ and $\Rightarrow_4$ landing on No.~30 and No.~62 as stated in Section~\ref{sec:conditionals}, and the Sasaki, contrapositive Sasaki, and relevance conditionals landing on No.~78, No.~46, and No.~14.
Table~\ref{tab:transition-rules} summarizes the resulting context-transition rules.
\begin{table}[htbp]
\caption{\label{tab:transition-rules}Context-transition rules, read off from the lattice structure of $\MO_2$. Here $Bxx$ denotes any single-context layer and $-$ denotes any bit vector.}
\centering
\begin{tabular}{l c c l}
\hline\hline
Operation & Contexts $(c_1,c_2)$ & Result context & Layer transition \\
\hline
$P_1 \wedge P_2$ & $(0,c)$ & $0$ & $B01 \wedge \text{anything} \to B01$ \\
$P_1 \wedge P_2$ & $(a,b)$ & $0$ & $B02 \wedge B03 \to B01$ \\
$P_1 \wedge P_2$ & $(a,a)$ & $a$ & $B02 \wedge B02 \to B02$ \\
$P_1 \vee P_2$ & $(0,c)$ & $c$ & $B01 \vee Bxx \to Bxx$ \\
$P_1 \vee P_2$ & $(a,b)$ & $1$ & $B02 \vee B03 \to B06$ \\
$P_1 \vee P_2$ & $(a,a^\perp)$ & $1$ & $B02 \vee B05 \to B06$ \\
$P^\perp$ & $(0,-)$ & $1$ & $B01 \to B06$ \\
$P^\perp$ & $(a,-)$ & $a^\perp$ & $B02 \to B05$ \\
$P^\perp$ & $(b,-)$ & $b^\perp$ & $B03 \to B04$ \\
\hline\hline
\end{tabular}
\end{table}

\bibliography{ref}

\end{document}